\newtheorem{theorem}{Theorem}
\newtheorem{lemma}{Lemma}
\newtheorem{definition}{Definition}
\newtheorem{proposition}{Proposition}
\newtheorem{corollary}{Corollary}
\title{Evolution of Plastic Dispersal in Stable Environment: Local Information of Fitness Consequence}
\author{
 Wayne Liang \\
  Department of Zoology\\
  University of Cambridge\\
  Downing Street, Cambridge, CB2 3EJ UK \\
  \texttt{wl415@cam.ac.uk} \\
   \And
 Rufus Johnstone \\
  Department of Zoology\\
  University of Cambridge\\
  Downing Street, Cambridge, CB2 3EJ UK \\
  \texttt{raj1003@cam.ac.uk}
}
\begin{document}
\maketitle
\begin{abstract}
Fitness consequence of dispersal depends on property of the entire landscape, which patches are available and what are the cost of moving. These are information that are not available locally when an organism make the decision to disperse. This poses a problem to the organism, where it is unclear how an adaptive decision can be made. This also poses a problem to the scientist, since in order to study the adaptiveness of dispersal, we need information of the entire landscape. For theorist, this is through making a series of assumption about either the landscape or the organism, and for empiricists, this means a large amount of measurements needs to be made across a large area. In this paper, we propose a link between local demographic process, which an organism can have access to, to the fitness consequence of dispersal. This meant local environmental cue can be used for the decision on dispersal, and hence allow the evolution of plastic dispersal strategy. We will then show that using this approach, evolution of dispersal on complex landscape can be modelled with relative ease, and to show that accidental dispersal in one patch can drive the evolution of adaptive dispersal in another.   
\end{abstract}


\section{Introduction}
Dispersal is an important trait that interacts with various aspect of biological evolution. It determines the pattern of gene flow allowing or restricting local adaptation \parencite{haldane1956RelationDensityRegulation}, changes relatedness structure and hence effects social evolution \parencite{temple2006DispersalPhilopatryIntergroup,escoda2017UsingRelatednessNetworks}, and alters population dynamics, affecting the persistence of the population \parencite{schreiber2010InteractiveEffectsTemporal}. Due to its central importance, it has long been a subject of theoretical investigation, which have made clear that dispersal generates strategic feedbacks: an individual’s optimal dispersal decision both depends on, and alters, the dispersal environment experienced by others. For example, Hamilton and May have shown that the evolution of dispersal can be shaped by kin selection \parencite{hamilton1977DispersalStableHabitats}, while other models have shown that dispersal is disfavored in the absence of temporal heterogeneity even with spatial variation \parencite{hastings1983CanSpatialVariation}, but favored under changing environmental conditions \parencite{parvinen2020EvolutionDispersalSpatially,parvinenEvolutionDispersalSpatiotemporal2023}. Taken together, these studies show that the evolution of dispersal is inherently frequency-dependent, with the fitness consequences of individual dispersal decisions shaped by the dispersal behaviour of others and by environmental variability.. 

For a decision on dispersal to be adaptive, it must on average provide fitness benefit to the decision maker. However, since dispersal is a choice between staying locally or moving to somewhere else in space, for the decision to be reliable it requires the focal individual to have information about the global environment. A similar constraint works for scientists as well: for a theorist, to solve for the optimal dispersal pattern requires global information about the metapopulation. Consequently, in order for models to remain tractable, modellers must often make strong assumptions of global homogeneity or symmetry. And for a empiricist, it is necessary to monitor across a large landscape to discern the fitness consequence of dispersal.

To deal with  these issues, theorists have mainly resorted to two approaches. Firstly, to let the disperser be omniscient \parencite{fretwell1969TerritorialBehaviorOther}, or at least allow them to sample patches with very small costs \parencite{ollason2001ApproximatelyIdealMore,forsman2022EvolutionSearchingEffort}, these are the ideal-free-distribution models. On the other hand, another approach is to let the environment be constant over an evolutionary timescale \parencite[e.g.][]{taylor1988InclusiveFitnessModel}, or at least to assume that the pattern of environmental change is constant \parencite[e.g.][]{cohen1991DispersalPatchyEnvironments}. This renders the environment  predictable, so that selection favours genotypes that act "as if" they are aware of the pattern of change. Whilst these methods have been applied with great success, assumption of low-cost sampling and environmental stability over evolutionary timescales limits the scope of natural systems to which the theory can be applied to. Furthermore,  the need for simplifying assumptions of global symmetry means that most theoretical models must work with extremely simplified landscapes.

In this chapter, I aim to develop a model of dispersal that requires only stability in the \textit{demographic }timescale, which could well be within a single generation and does not require the stability over evolutionary timescale as is often presumed. I will show that there is sufficient information in the local demographic processes to determine the fitness consequence of dispersal. The benefit is three folds: Firstly and most importantly, this shows there are sufficient environmental cue locally for a plastic dispersal strategy to evolve, allowing different probability of dispersal in different patches following the same reaction norm. Secondly, this is a new tool for modelling. Since optimal dispersal strategies can be determined by local population attributes, models can be greatly simplified, and consequently, much more realistic assumptions about the landscape and the mode of dispersal can be made. Lastly, if the theoretical result is validated empirically, cost of measuring the fitness consequence of dispersal would be greatly reduced, since one need only measure rate of movement into and out of the focal patch, without the need to track individuals across the landscape. 

We would first illustrate the approach through a simplified model. First, we will show that the fitness consequence of dispersal from a patch can be directly inferred from the local rate of immigration and emigration. Secondly, we will show how individuals can detect the fitness consequence of dispersal through knowledge of local population growth rate. In the third section, we will show the power of this modelling framework, through modelling evolution of dispersal in complex landscape. We will also show how accidental dispersal can drive the evolution of adaptive dispersal, even in the absent of kin selection and temporal heterogeneity, in contrast with previous results \parencite[]{hastings1983CanSpatialVariation, parvinenEvolutionDispersalSpatiotemporal2023}. Lastly, we will show how kin selection can be incorporated into this modelling framework.

\section{Models}
\label{sec:models}
\subsection{Taster Model}
\label{subsec: taster model}
To illustrate the general approach, an island model is built where the within-patch dynamic is the same as those in \textcite{taylor1988InclusiveFitnessModel} but the assumptions about global parameters are relaxed. Specifically, we assume that for a focal patch, there are $N$ breeding site, and for each adult who occupies a breeding site, they produce $c$ offspring. Some of these offspring then disperse with probability $d$ resulting in $Ncd = \mathbf{o}$ emigrants, and the patch receive a total number of $\mathbf{i}$ juvenile immigrants. All the adults then dies, with the juveniles competing for the breeding site equally. Those juveniles that does not win a breeding site dies. The only assumption about the global property is that the population dynamics has reached an steady state so that reproductive value of each class is well defined. 

\subsection{General Model}
\label{subsec:general_model}
Suppose that there are a number of $N$ patches, each with their internal dynamics. Suppose that given a plastic dispersal strategy, it is possible to write the population dynamic as:
\begin{align}
    \mathbf{n}_{t+1} = M \mathbf{n}_t
\end{align}
where the $i^{th}$ index of $\mathbf{n}_t$ is the population size of patch $i$ at time $t$. Suppose that $M$ is both ergodic and positive semidefinite. The diagonal entries of the matrix $M$ thus represents the internal dynamic of each patches, and the off-diagonal entries $M_{ij}$ represent the movement from patch $j$ to $i$. Since we only wish to study the decision to disperse, let's suppose that the probability of an disperser to arrive in patch $i$ provided they are from patch $j$ is fixed regardless of dispersal probabilities in either patches, provided they have dispersed. We can then rewrite the matrix $M$ as:
\begin{align}
    M = (I-D)S + DT
\end{align}
Where $D$ is a diagonal matrix with indices $D_{ii}$ being the dispersal probability at patch $i$, $S$ a diagonal matrix for the within-patch dynamic that an organism would experience in case they did not disperse, and $T$ a matrix where the entry $T_{ij}$ encodes the probability that an dispersed individual who left patch $j$ enters patch $i$. 

\subsection{Model for Non-Linear Local Population Growth Rate}
\label{subsec:non_linear}
Suppose on a meta population, the dynamic of each patch can be described by the following equation:
\begin{align}
    \Delta n_{i} = b_i(n_{i,t}) - d_i(n_{i,t}) +\mathbf{i}-\mathbf{o} \label{eq:discrete non linear}
\end{align}
in discrete time, or
\begin{align}
    \frac{dn_i}{dt} = b_i(n_{i,t}) - d_i(n_{i,t}) +\mathbf{i}-\mathbf{o}
\end{align}
 in continuous time, where $b_i$ and $d_i$ is the patch specific function that gives us birth and death rate respectively on the patch given population size, and $\mathbf{o}$ and $\mathbf{i}$ denotes the efflux and influx to the patch via dispersal.

\section{Results}
\subsection{Direct Fitness Benefit of Dispersing can be Deduced from the Net Flux into the Patch}
We show that the direct fitness consequence of dispersing, as opposed to staying, is equal to the netflux into the patch. Or:
\begin{align}
    \Delta w = \mathbf{i} - \mathbf{o} \label{eq:fitness consequence of dispersal}
\end{align}
This can be seen by analysis of the taster model (subsection \ref{subsec: taster model}), the result also holds more generally, allowing for overlapping generation and growing or shrinking population sizes. This can be seen through the analysis of the general model (subsection \ref{subsec:general_model}) which will be reserved in the appendix \ref{app:general_model}. Since in the taster model, the population size does not change, the reproductive value of an adult at time $t$ is the same as an adult in time $t+1$, or:
\begin{align}
    V_{a, t} = V_{a,t+1} = V_a
\end{align}
where $V$ denotes the reproductive value, and the subscript $a$ denoting the adults. Since the adult dies in that time step, it's reproductive value is equal to the sum of the reproductive value of all offspring. We thus have:
\begin{align}
    V_{a} = cdV_d + c(1-d)V_s 
    \label{eq:repro_val_adult}
\end{align}
where $V_d$ and $V_s$ denotes the reproductive value of juveniles that dispersed and stayed respectively, $c$ the fecundity of adult and $d$ the probability of a juvenile to disperse. For a staying young, it gains a reproductive value of $0$ if it failed to win a breeding spot in the competition, and a reproductive value of $V_a$ if it succeed. Thus the reproductive value of a staying young is simply the probability it gain a breeding spot times the reproductive value of an adult. In equation terms:
\begin{align}
    V_s = \frac{N}{Nc(1-d) + \mathbf{i}}V_a
\end{align}
or equivalently:
\begin{align}
    V_a = \frac{Nc(1-d) + \mathbf{i}}{N}V_s
    \label{eq:repro_val_young}
\end{align}
Substituting equation \ref{eq:repro_val_young} into equation \ref{eq:repro_val_adult}, we get the expression:
\begin{align}
    (Nc(1-d)+\mathbf{i})V_s = NcdV_d + Nc(1-d)V_s
\end{align}
Cancelling out the terms and noting that $Nc$ is the total number of locally born juvenile, and $d$ is the dispersal probability, therefore $Ncd$ is the total number of emigrant from the patch, we have:
\begin{align}
    \mathbf{i}V_s = \mathbf{o}V_d
\end{align}
or equivalently:
\begin{align}
    \frac{V_d}{V_s} = \frac{\mathbf{o}}{\mathbf{i}}
\end{align}
Since both reproductive value and the flux of individual are positive, and that reproductive value can be scaled arbitrarily as long as the ratio are conserved, we can deduce that:
\begin{align}
\label{eq:substitution1}
    V_d&=\mathbf{i}\\
    V_s&=\mathbf{o} \label{eq:substitution2}
\end{align}
For an individual to leave the patch, effectively, it reduces the number of "offspring" that stayed by one, and increased the number of "offspring" that left by one. Therefore, the fitness consequence is:
\begin{align}
    \Delta w = V_d-V_s = \mathbf{i}-\mathbf{o}
\end{align}
Which equals to the net flux of movements into the patch.

\subsection{Population Growth Rate Equals Migration Flux in a Stable Population}

We will demonstrate in this section that the population growth rate is the same as the net flux out of the patch in a stable population. Suppose that we have a meta-population as per section \ref{subsec:non_linear} where within each patch, the dynamic is described by equation \ref{eq:discrete non linear}. Then, suppose that the population reached a fixed stable state such that the population size is constant at each patch. We then have:
\begin{align}
    \Delta n_i =  b_i(n_{i,t}) - d_i(n_{i,t}) +\mathbf{i}-\mathbf{o} = 0
\end{align}
Therefore:
\begin{align}
     b_i(n_{i,t}) - d_i(n_{i,t}) = \mathbf{o}-\mathbf{i}
\end{align}
Since at equilibrium, the dynamics can be linearized to make a matrix population model, through derivation in \ref{app:general_model}, we know that the fitness of dispersal is as described in equation \ref{eq:fitness consequence of dispersal}, therefore we have:
\begin{align}
    b_i(n_{i,t}) - d_i(n_{i,t}) = -\Delta w
\end{align}
Since only the sign of $\Delta w$ matters, we can divide the local population growth rate on the left hand side by the local population size, obtaining the per capita growth rate. 
Therefore, the per capita growth rate without dispersal provides sufficient environmental cue for an organism to determine the direct fitness consequence of dispersal, thereby allowing evolution of plastic dispersal strategy. 
\subsection{Evolutionary Stable State of Dispersal Matches the Influx to Efflux}
We will show that the evolutionary stable state (ESS) of plastic dispersal strategy when kin selection is ignored is to match efflux to influx locally. The local population growth rate is the sufficient environmental cue that an organism can use to achieve this. I will then show that this is unattainable provided there are dispersal cost, recovering canonical results as found in \textcite{hastings1983CanSpatialVariation}. However, I will also show that provided there are accidental dispersal, adaptive dispersal can evolve, and there exist a way to solve for the ESS in arbitrary cases. 

\subsubsection{Adaptive Dispersal cannot Evolve with Homogeneous Environment without Kin Selection}
\label{subsec:nodisperse}
Since the fitness consequence of dispersal instead of staying is equal to the net-flux into the patch, and at ESS, different strategy should have the same fitness consequence \parencite[]{bishopGeneralizedWarAttrition1978}, thus ESS is the case where net flux is zero in each of the patches. Or:
\begin{align}
    \Delta w_i = 0 \Rightarrow \mathbf{i}_i = \mathbf{o}_i
\end{align}
Where the subscript $i$ denotes the fitness consequence of dispersing relative to staying in patch $i$, and the number of emigrant and immigrant in patch $i$. Specifically, since decision to disperse locally can only change the number of emigrant but not immigrant, ESS is attained through matching the number of emigrant locally to the given number of immigrant. However, since all immigrant were emigrant, there are always less immigrant than emigrant across the landscape provided there is a cost of dispersal. Or:
\begin{align}
    \sum_i\mathbf{i}_i < \sum_i\mathbf{o}_i
\end{align}
Therefore, on average, natural selection selects for reduced dispersal propensity across all patches. This is a generalized result of \textcite{hastings1983CanSpatialVariation} where it is noted even with spatial variation, dispersal is selected against in the absence of kin selection and temporal variation.

\subsubsection{Adaptive Dispersal can be Driven by Accidental Dispersal}
\label{subsec:accidental_disp}
However, if accidental dispersal, which is prevalent in nature \parencite[]{burgess2016WhenDispersalDispersal}, is incorporated into the model, then adaptive dispersal can evolve. This is because, from the above result, we know that there is always going to be some patches where there are more emigrant then immigrant, therefore dispersal is selected against. However, if emigrant from these patches cannot be further reduced due to accidental dispersal, the evolutionary dynamic would be stable in these patches. These patches, now with a stable stream of emigrant, can then supplement the lost disperser such that in all other patches the number of immigrant and emigrant is matched.

Formally, suppose at each patch there is a minimum and maximum dispersal probability $p_{min,i}$ and $p_{max,i}$ respectively. The number of individuals in each patch being $n_i$, then the minimum and maximum number of emigrant it can produce is $\mathbf{o}_{min,i} = p_{min,i}n_i$ and $\mathbf{o}_{max,i} = p_{max,i}n_i$ respectively. Suppose the probability of an disperser leaving patch $j$ to successfully arrive at patch $i$ is fixed at $T_{ij}$, then we have:
\begin{align}
    \mathbf{i}_i = \sum_jT_{ij}\mathbf{o}_j
\end{align}
Or equivalently:
\begin{align}
    \hat{\mathbf{i}} = T\hat{\mathbf{o}}
\end{align}
where $\hat{\mathbf{i}}$ and $\hat{\mathbf{o}}$ are the vectors whose $i^{th}$ index is the number of immigrant and emigrant at patch $i$ respectively, and $T$ the dispersal matrix whose entry $T_{ij}$ is the probability of an disperser from patch $j$ to arrive at patch $i$. The ESS across the landscape is thus the solution of:
\begin{align}
\label{eq:ESS}
    p_i:\begin{cases}
        p_i = p_{max,i}, \hat{\mathbf{i}}_i \geq \hat{\mathbf{o}}_i\\
        p_{min,i}<p_i<p_{max,i}, \hat{\mathbf{i}}_i = \hat{\mathbf{o}}_i
        \\
        p_i = p_{min,i}, \hat{\mathbf{i}}_i \leq \hat{\mathbf{o}}_i
    \end{cases}
\end{align}
and 
\begin{align}
    \hat{\mathbf{o}} = \hat{p}\odot\hat{n}
\end{align}
where $\hat{p}$ is a vector of dispersal strategy at each patch, $\hat{n}$ vector of population size, and $\odot$ denotes element-wise multiplication. The equations in \ref{eq:ESS} means at ESS, the maximum dispersal probability is stable if there are more immigrant than emigrant despite the maximum being reached, and minimum dispersal probability is stable if emigrant under minimal dispersal still outnumbered immigrant. In all other patches the number of emigrant would match the number of immigrant. 

It can be proven that given a transition matrix $T$, a population size vector $\hat{n}$, the maximum and minimal dispersal probability $\hat{p}_{min}$ and $\hat{p}_{max}$, there exist a unique ESS (See appendix \ref{Appendix: proof}) and it can be found with an algorithm (See appendix \ref{Appendix: algo}). It can also be shown that when dispersal cost is low, the efflux vector $\hat{\mathbf{o}}$ at ESS is approximately the leading eigenvector of the transition matrix $T$, providing it is a permissible strategy. This means if we consider a landscape network with each mode representing a patch and edges the probability of movement, then ESS at low dispersal cost would lead to efflux proportional to the eigen vector centrality of each patch (See appendix \ref{app:eigen}). 

To test the prediction, we compare, in a simple two patch model, the analytical result and simulated evolution with plastic dispersal strategy. See appendix \ref{Appendix: IBM} for details on the simulated evolution. We imagine a case where there are two patches, each hosting 500 adults. On the "small" island, each adult produce 10 offspring, whilst on the "large" island, each adult produce 50 offspring. On these island, the organisms follows life cycled as described in the taster model (section \ref{subsec: taster model}). We let the minimum dispersal probability to be 0.2 and maximum 1 across both island, and varied the dispersal cost, which is the probability of an dispersed individual failed to arrive at the other patch. By solving equation \ref{eq:ESS}, we find that the ESS for dispersal on the "large" island is always $p_{min}$, and for dispersal cost less than 0.8, efflux of the "small" island matches the influx, leading to the dispersal probability of:
\begin{align}
    d_{small} = (1-k)\frac{n_{large}}{n_{small}}p_{min}
\end{align}
Where k is the dispersal cost, $n_{large}$ and $n_{small}$ the number of juvenile in the large and small patch before the dispersal phase, being 25000 and 5000 respectively, and $p_{min}$ being 0.2. Substituting in the values, the evolutionary stable strategy for small patch equals to $1-k$ in this particular setting. When the dispersal cost is greater than 0.8 the ESS for both patches are $p_{min}$ since influx into both patches are less than efflux by accidental dispersal alone. Indeed, we see that the individual based model confirms our analytical prediction (Fig. \ref{fig:ibm}). 

\begin{figure}[h]
    \centering
    \includegraphics[width=0.7\linewidth]{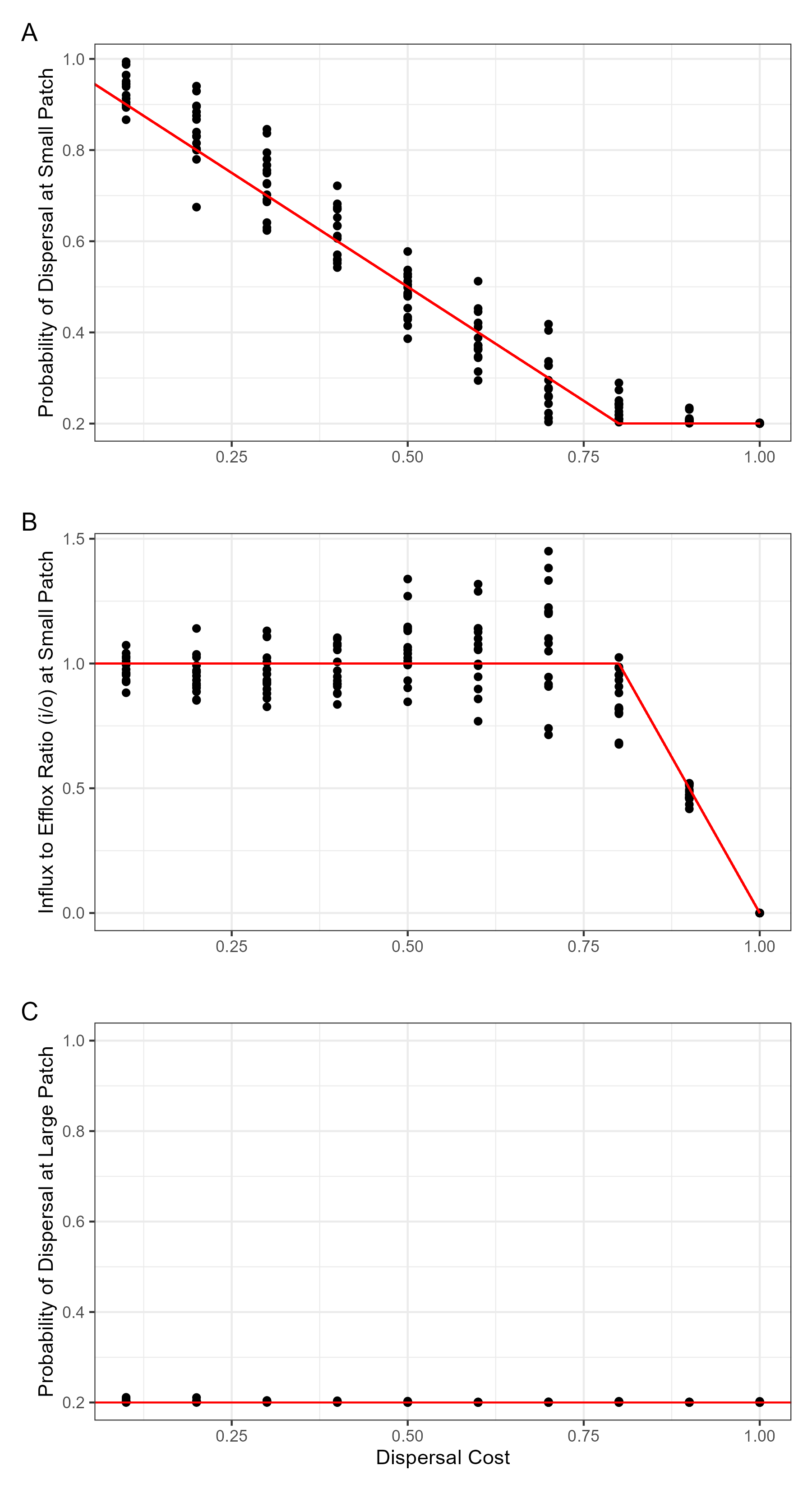}
    \caption{Box plots of 15 realisations of two island dispersal scenario in simulated evolution. Red line is the prediction from the analytical result. the x axis is dispersal cost with y-axis being the A) dispersiveness of the smaller patch, B) the ratio between immigrant and emigrant in the small patch, and C) the dispersiveness in the large patch. }
    \label{fig:ibm}
\end{figure}
\subsection{Incoorporating Kin Selection}
\label{subsec:kin}
Since the work of \textcite{hamilton1977DispersalStableHabitats}, we know that kin selection is important in shaping dispersal strategy. In this section we well analyze the taster model (section \ref{subsec: taster model}) to gain an intuition. The analysis of the general model (section \ref{subsec:general_model} with kin selection using  Generalizing the approach from \textcite{taylor1988InclusiveFitnessModel} to analyze the taster model (section \ref{subsec: taster model}), we have the expression:
\begin{align}
    \Delta w = V_d - V_s + V_sr_s -V_dr_d
\end{align}
Where $V_d$ and $V_s$ is the reproductive value of dispersing and staying respectively, $r_d$ and $r_s$ is the relatedness to those they compete with if they disperse or stay respectively. This is because, when an organism disperse, it displace a young where it disperse to, with an average reproductive value of $V_d$, which result in a inclusive fitness consequence of $V_d r_d$, but reduce competition in the natal patch such that the combined offspring pool gain the advantage $V_s$, leading to the focal individual gaining inclusive fitness of $V_sr_s$. Substituting in the relation between reproductive value and migration rate (equation \ref{eq:substitution1}, \ref{eq:substitution2}) we get the expression:
\begin{align}
    \Delta w = \mathbf{i} - \mathbf{o} + r_s\mathbf{o}-r_d\mathbf{i}
\end{align}
Since $\Delta w$ must be zero at ESS, we will have 
\begin{align}
    \frac{\mathbf{i}}{\mathbf{o}} = \frac{1-r_s}{1-r_d}
\end{align}
If we further assume isolation by distance, therefore $r_s \geq r_d$, we have:
\begin{align}
   \mathbf{i} -\mathbf{o} + r_s\mathbf{o} \geq \Delta w \geq (\mathbf{i} - \mathbf{o})(1-r_s)
   \label{eq:bounds_of_fitness}
\end{align}
Importantly, all parameters in equation \ref{eq:bounds_of_fitness} are present locally. This means the organism can, in principle, gain information locally to estimate the inclusive fitness of dispersal decision.

\section{Discussion}
\subsection{So, Why does This Work?}
In this paper, we show that the fitness consequence of dispersal in a stable environment is simply the net flux into the patch. At first this might seems counter intuitive, as the fitness consequence of dispersal depends on factors across the landscape. An intuitive way to think about this is as follows. Since in the population equilibrium, population size in each patch remains constant. Thus, the fitness of staying in the local patch depends on the extent of competition. If there are more influx than there are efflux in a patch, than individuals in that patch experience more competition than the average meta population and vice versa. Similarly, when the metapopulation declines or increases at a constant rate, with the ratio between population sizes in different patches held constant, the same logic applies. Whilst this intuitive idea may aid understanding, it can lead to paradoxes. For example, even if all the emigrant from a focal patch ends up in patches with worse competition, provided the focal patch has more immigrant than emigrant, it still pays for an organism to disperse (Fig. \ref{fig:paradox}).

\begin{figure}[ht]

\centering

\begin{tikzpicture}[
  font=\small,
  >=Latex,
  pop/.style={draw, rounded corners, align=center, minimum height=12mm},
  source/.style={pop, minimum width=38mm, minimum height=18mm},
  smallpop/.style={pop, minimum width=30mm},
  flux/.style={-Latex, line cap=round},
  veryhigh/.style={flux, line width=2.6pt},
  high/.style={flux, line width=2.0pt},
  med/.style={flux, line width=1.4pt},
  low/.style={flux, line width=0.9pt}
]

\node[source] (S) {Source\\(large population)};
\node[smallpop, below left=22mm and 36mm of S] (P1) {Small pop 1\\\textbf{(focal)}};
\node[smallpop, below right=22mm and 36mm of S] (P2) {Small pop 2};

\draw[med]
  (S.south west)
  .. controls +(235:18mm) and +(70:18mm)
  .. node[midway, above left] {\textbf{0.20}}
  (P1.north);

\draw[veryhigh]
  (S.south east)
  .. controls +(305:18mm) and +(110:18mm)
  .. node[midway, above right] {\textbf{0.35}}
  (P2.north);

\draw[high]
  (P1.east)
  .. controls +(10:18mm) and +(170:18mm)
  .. node[midway, above] {\textbf{0.10}}
  (P2.west);

\draw[low]
  (P2.west)
  .. controls +(190:16mm) and +(350:16mm)
  .. node[midway, below] {\textbf{0.05}}
  (P1.east);

\draw[low]
  (P2.north)
  .. controls +(90:18mm) and +(270:18mm)
  .. node[midway, below] {\textbf{0.05}}
  (S.south);

\node[below=2mm of P1] {$\mathbf{i}_1 - \mathbf{o}_1 = 0.15$};
\node[below=2mm of P2] {$\mathbf{i}_2 - \mathbf{o}_2 = 0.35$};

\end{tikzpicture}
\caption{Migration structure with "paradoxical" fitness benefit. Arrow width and labels denote migration rates. Both small populations have positive net immigration, with $\mathbf{i}_1 - \mathbf{o}_1<\mathbf{i}_2 - \mathbf{o}_2$. The unit of migration being the small population size, which is assumed to be the same in the two small populations. For the ease of graphing dispersal cost is assumed to be 0. Even though staying in the focal population leads to less competition then leaving for small population 2, dispersal still brings fitness advantage according to the calculations. }
\label{fig:paradox}
\end{figure}
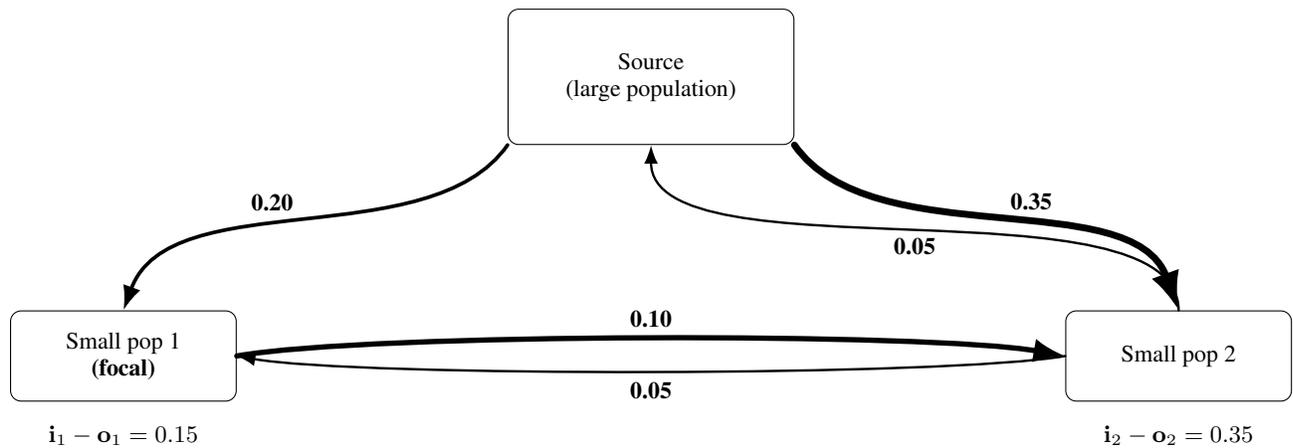

\subsection{Plastic Dispersal Strategy}
\label{subsec:plastic}
Plastic dispersal strategy is prevalent in nature (see \textcite{thierry2024InterplayAbioticBiotic} for a list of examples), and it has received much theoretical attention \parencite[]{fretwell1969TerritorialBehaviorOther, ollason2001ApproximatelyIdealMore, parvinen2020EvolutionDispersalSpatially}. An organism must have access to information of the fitness consequence of dispersal, conditioned on the current landscape, in order to make an adaptive decision. However, the fitness consequence of dispersal depends on conditions across the landscape. This meant that modeller need to build in mechansims through which organisms can gain reliable information of the fitness consequence of dispersal. This has traditionally been done in two ways: Firstly, organisms are assumed to have access to information of external environment \parencite[]{fretwell1969TerritorialBehaviorOther} or at least allowed to sample the landscape with some cost \parencite[]{ollason2001ApproximatelyIdealMore} with imperfect information \parencite[]{matsumura2010ForagingSpatiallyDistributed}. This have generated a wealth of literature with good empirical support \parencite[]{hache2013ExperimentalEvidenceIdeal, martin2022LargeHerbivoresPartially}. Whilst the assumption may be realistic for some organism where the cost of movement, and therefore sampling, are low, this framework clearly does not captures the whole range of biological possibilities. Secondly, some models posit that environment is stable over evolutionary time \parencite[]{hastings1983CanSpatialVariation} or at least the pattern of change is stable \parencite[]{cohen1991DispersalPatchyEnvironments,massolEvolutionDispersalSpatially2015,parvinen2020EvolutionDispersalSpatially,parvinenEvolutionDispersalSpatiotemporal2023}. The constancy of change meant that information on local patch quality is sufficient to deduce the fitness consequence of dispersal, as such, organism needs only sense the type of patch they are in. This meant that the genotype that "act as if they know" global information will be selected for. However, this modelling approach results in organisms with genetically determined dispersal probability conditioned on local patch quality. If the pattern of change changes, these strategies then become maladapted. Whilst this theoretical tradition had provided us with a wealth of intuition, the assumption of stable patterns of change over evolutionary timescale seems unrealistic, and these predictions are seldom tested empirically. 

In our model, we have shown that if we ignore kin selection, the fitness consequence of dispersal is strictly equal to the net migration into the patch in a stable environment, provided that the demography have stabilized. Demography is said to be stabilized when the ratio between the population size in each patch is stabilized. Crucially, this meant that there are sufficient information locally in the rates of movements. Hence organism can estimate the fitness consequence of dispersal without the need to leave the patch. Whilst it seems unclear how an organism will be able to estimate the net migration into and out of the patch, we have also shown that if the population size is stable over time, then an organism should stay when the rate of population growth is positive, and to disperse when it is negative. Provided, then, if an organism can can sense the per capita growth rate through it's own struggle or welfare, organisms can effectively make adaptive decision to dispersal. This is in line with our biological intuition and the "good-stay, bad-disperse" rule \parencite[]{hui2012FlexibleDispersalStrategies}. If the organism has the capacity to perform kin recognition, then, the inclusive fitness of dispersing can also be accurately estimated. Since this strategy is optimal in all stable environment, organism with dispersal strategy evolved in one environment will perform equally well in a different environment, provided the demography have time to stabilize. The same strategy also effectively collects information through local processes, widening the range of biological systems it applies to compared to the ideal-free distribution literature. 

\subsection{Accidental Dispersal Drives Evolution of Adaptive Dispersal in Stable Environment}

Previous theoretical work has shown that spatial variation alone is insufficient to favour the evolution of adaptive dispersal in stable environments when kin selection and temporal heterogeneity are absent \parencite{hastings1983CanSpatialVariation, holt1985PopulationDynamicsTwopatch, cohen1991DispersalPatchyEnvironments, gyllenberg2002EvolutionarySuicideEvolution, parvinen2006EvolutionDispersalStructured, parvinenEvolutionDispersalSpatiotemporal2023}. We recover this result in a general setting (Section \ref{subsec:nodisperse}): when dispersal can be reduced arbitrarily close to zero in all patches, the evolutionary stable state corresponds to minimal dispersal everywhere, because the total number of immigrants across the landscape is strictly smaller than the total number of emigrants when dispersal is costly. Under these conditions, selection consistently favours reduced dispersal, even in heterogeneous landscapes.

However, this conclusion hinges on a strong assumption about the set of viable strategies. Some past work explicitly set the viable dispersal strategy between 0 and 1 \parencite[ and our section \ref{subsec:nodisperse}]{cohen1991DispersalPatchyEnvironments,gyllenberg2002EvolutionarySuicideEvolution,parvinenEvolutionDispersalSpatiotemporal2023} and others focus only on whether a mutation with reduced dispersal will be selected for provided it have already arose \parencite{hastings1983CanSpatialVariation}. Both effectively allows probability of dispersal to be arbitrarily small. When this assumption is relaxed to allow for accidental dispersal alongside plastic dispersal strategy (Section \ref{subsec:plastic}), adaptive dispersal can evolve even in the absence of kin selection or temporal variation. Accidental dispersal imposes a lower bound on emigration from some patches, forcing a non-reducible outflow at a cost. This enforced emigration reduces competition locally in the source patch while increasing competitive pressure elsewhere. As a result, patches receiving accidental dispersers experience excess immigration relative to emigration, creating conditions under which dispersal can be favoured as a means of escaping locally inflated competition. Importantly, this process does not require coordination across patches: when immigration exceeds emigration in a patch, selection favours increased dispersal, whereas when emigration exceeds immigration, dispersal is selected against thus maintaining the balance. In this way, forced dispersal in some patches generates suppliments the loss of immigrant due to migration cost in the rest of the metapopulation so the balance can be maintained.

The effect of accidental dispersal on the evolutionary dynamics of dispersal strategies is illustrated in Fig. \ref{fig:accidental_disp}. The figure shows the selection gradients in the space of dispersal probabilities in a two-patch system. In the absence of accidental dispersal (left panel), selection drives both patches toward minimal dispersal, reproducing the classic result. When a lower bound on dispersal is imposed in (right panel), whilst the dispersal probability are still selected to decrease across the parameter space, since it is bounded by the minimum dispersiveness due to accidental dispersal it supplies sufficient migration to the small patch such that adaptive dispersal evolve there. Because the selection gradients depend only on local migration fluxes, the same logic extends to arbitrary landscapes, where constrained patches act as sources that shape dispersal evolution elsewhere.

Accidental dispersal is prevalent in natural systems \parencite{burgess2016WhenDispersalDispersal}, arising through a variety of mechanisms including physical constraints \parencite[]{naman2016CausesConsequencesInvertebrate}, pleiotropy with other traits \parencite{strathmann2002EvolutionLocalRecruitment}, and dispersal as a by-product of foraging \parencite{vandyck2005DispersalBehaviourFragmented} or predator avoidance \parencite{leblond2016CaribouAvoidingWolves}. For example, selection for increased height under light competition in plants can inadvertently increase dispersal distances, particularly in wind-dispersed species. A striking animal example is provided by flightless insects on sub-Antarctic and Antarctic islands, where strong wind induced accidental dispersal is so costly that they have evolved flightlessness \parencite[]{leihy2020WindPlaysMajor}. In these systems, selection has favoured both morphological and behavioural traits that reduce dispersal \parencite{peckham1971notes}. Nevertheless, accidental dispersal still occurs: individuals of the Antarctic midge Belgica can be transported passively on wind-blown debris or by birds, despite being wingless \parencite{peckham1971notes}. These examples illustrate that strictly zero dispersal is often biologically unattainable, supporting the assumption of a non-zero lower bound on dispersal probability.

Taken together, our results show that accidental dispersal fundamentally alters the evolutionary consequences of spatial variation. By constraining the set of viable dispersal strategies, it generates persistent migration fluxes that reshape local selection gradients and allow adaptive dispersal to evolve in stable environments. Crucially, because the resulting selection pressures depend only on locally realised migration fluxes, this mechanism is compatible with dispersal strategies based on local information and extends naturally to complex landscapes.
\begin{figure}
    \centering
    \includegraphics[width=1\linewidth]{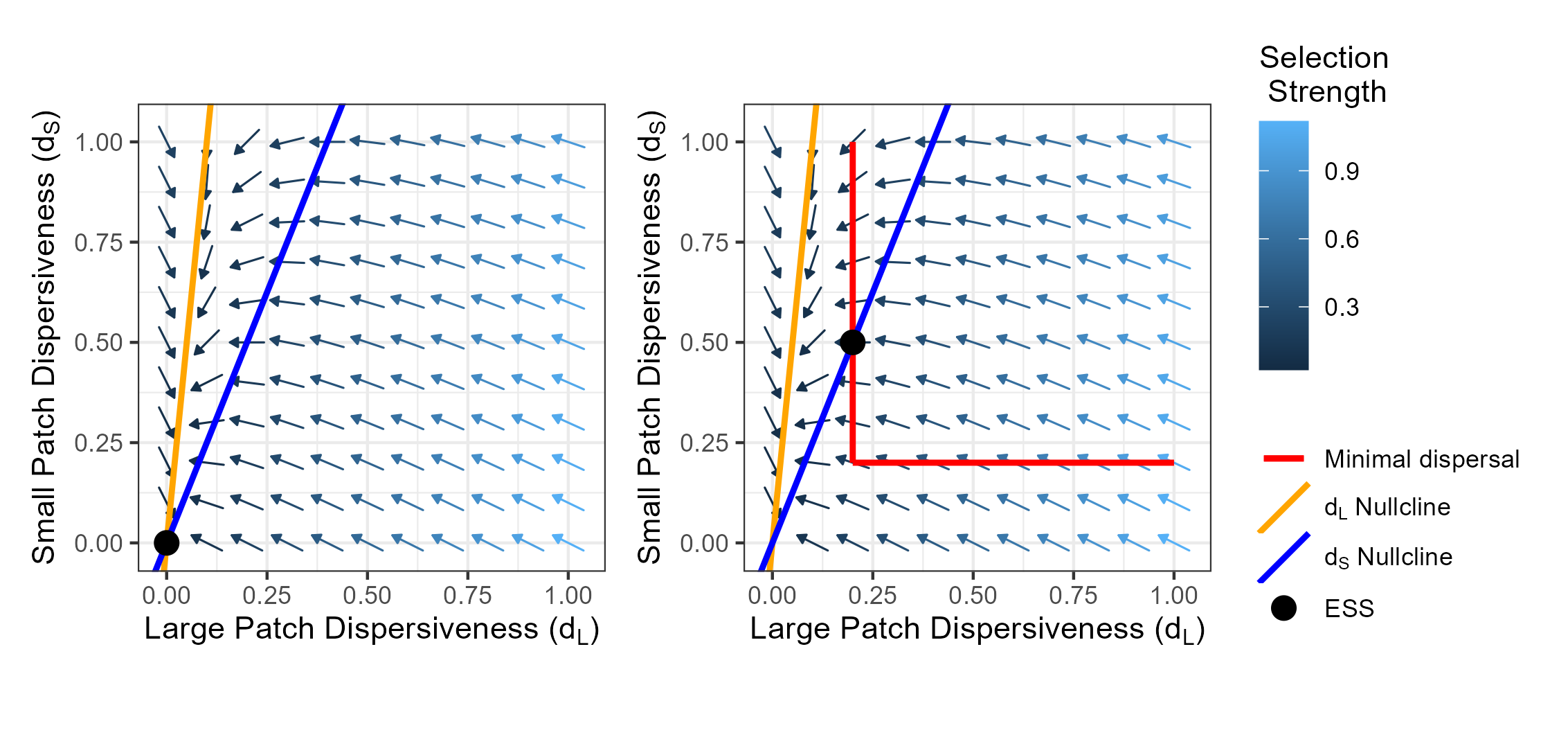}
    \caption{Selection gradients on dispersal probabilities in a two-patch system. Arrows indicate the direction and strength of selection. Left: when dispersal can be reduced to zero in both patches, selection drives both toward minimal dispersal. Right: imposing a lower bound on dispersiveness (red boundary) generates a persistent migration flux and changes the ESS where the large patch evolve to have minimal dispersal which leads to adaptive dispersal being selected for in the large patch. This demonstrates how accidental dispersal in one patch can drive the evolution of adaptive dispersal in another. The selection gradient is drawn by setting the large patch to be 5 times more productive than the small patch with dispersal cost of 0.5, and minimal accidental dispersal to be 0.2. }
    \label{fig:accidental_disp}
\end{figure}

\subsection{Theoretical and Empirical Applications}
\label{subsec:applications}

Although many theoretical studies of dispersal are, in principle, formulated for arbitrarily complex landscapes, practical analysis is often restricted to systems with only two or a small number of patch types in order to remain tractable \parencite{massolEvolutionDispersalSpatially2015,parvinen2020EvolutionDispersalSpatially, parvinenEvolutionDispersalSpatiotemporal2023}. This limitation reflects the difficulty of evaluating the fitness consequences of dispersal when those consequences depend on global landscape properties, including spatial structure and patch specific demography. Due to the difficulty in formulating analytical fitness expression in complex landscapes, studies on more realistic landscape are typically done with individual-based simulated evolution \parencite{tonkin2018RoleDispersalRiver, finand2024EvolutionDispersalMaintenance}.

The framework developed here substantially relaxes these constraints by reducing the problem of dispersal evolution to one governed by locally realised demographic processes. Because the direct fitness consequence of dispersal depends only on the balance of immigration and emigration at a focal patch, dispersal strategies can be analysed without explicit reference to global landscape property. If the landscape is represented as a directed network in which nodes correspond to patches and directed edges encode the probability that a dispersing individual from patch $j$ arrives in patch $i$, then the evolutionary stable dispersal strategy across the entire landscape can be characterised by solving a system of local balance conditions (Eq.~\ref{eq:ESS}). Given a dispersal transition matrix, patch-specific population sizes, and bounds on feasible dispersal probabilities, this system admits a unique evolutionary stable strategy (Appendix \ref{Appendix: proof}) that can be obtained algorithmically (Appendix \ref{Appendix: algo}), even in landscapes with arbitrary topology and asymmetric movement patterns.

This approach also facilitates the analysis of dispersal evolution in continuous landscapes. We illustrate this by deriving conditions under which adaptive dispersal evolves in a continuous landscape with spatial autocorrelation and organisms disperse via a dispersal kernel, with full derivations provided in the appendix \ref{app:conti}. We obtain the condition where adaptive dispersal is likely to evolve without the need of discretising the landscape.

The framework can be extended to incorporate kin selection in a straightforward manner. Kin selection plays an important role in the evolution of disprsal \parencite[]{hamilton1977DispersalStableHabitats, taylor1988InclusiveFitnessModel,irwin2001EvolutionAltruismSteppingStone,taylor2013SocialEvolutionDispersal}. Though the bulk of our paper considers cases where kin selection is ignored, provided that the process determining relatedness within and between patches is specified, kin selection enters into the fitness calculation without altering the central role of local migration fluxes (Section \ref{subsec:kin}). This allows kin-structured dispersal models to be analysed on complex landscapes without the need to track the global property of the landscape.

Beyond its theoretical advantages, the framework also has potential empirical applications. Estimating the fitness consequences of dispersal in natural systems is notoriously challenging, as it typically requires tracking individuals across large spatial scales and inferring fitness through indirect proxies such as survival or recruitment \parencite[]{doligez2008EstimatingFitnessConsequences}. A common complication is that individuals dispersing beyond the study area are often treated as having died, leading to systematic underestimation of dispersal benefits \parencite[]{doligez2008EstimatingFitnessConsequences,davis2023MultistateModelEstimate}. In contrast, our results predict that, in demographically stable systems, the direct fitness consequence of dispersal can be inferred from local net migration fluxes alone. If this prediction is validated empirically, it would substantially reduce the logistical and financial costs of studying dispersal by requiring only local measurements of immigration and emigration rates, rather than individual-based tracking across the landscape.


\newpage
\printbibliography[heading=bibliography]

\newpage
\appendix

\appendix
\section{Analysis of the General Model}
\label{app:general_model}
The general model found in section \ref{subsec:general_model} is detailed below. Recall that we define a matrix population model such that:
\begin{align}
    \mathbf{n}_{t+1} = M\mathbf{n}_t
\end{align}
where the $i^{th}$ index of $\mathbf{n}_t$ is the population size of patch $i$ at time $t$ and $M$ a transition matrix that defines the dynamic. In particular, we assume $M$ is ergodic and it can be decomposed into the following matrix:
\begin{align}
    M = (I-D)S+DT
\end{align}
Where $D$ is a dagonal matrix with indices $D_{ii}$ being the dispersal probability at patch $i$, $S$ a diagonal matrix with $S_{ii}$ denote the local population growth rate an non-dispersing individual would experience in a time step, and $T$ a matrix where the entry $T_{ij}$ denotes the probability that a dispersing individual in patch $j$ ended up in patch $i$. 

To calculate the reproductive value of an dispersing and an staying individual, we recall the fact that the reproductive value of patch $x$ is the $x^{th}$ index of the left leading eigen vector of the transition matrix \parencite[]{goodman1967ReconciliationMathematicalTheories, caswell1982StablePopulationStructure}. Therefore we have:
\begin{align}
    \lambda(D) v_x = (v^TM)_x =  (1-D_{xx})S_{xx} v_x + D_{xx}\sum v_iT_{ix}
\end{align}
Where $\lambda(D)$ is the leading eigen value of the transition matrix $M$ given a dispersal strategy $D$, $v_i$ the reproductive value of an individual at patch $i$, and the subscript $x$ denotes the focal patch. By defining the reproductive value of dispersing as:
\begin{align}
    v_{d,x} &= \sum v_iT_{ix}
\end{align}
we then have:
\begin{align}
    \lambda(D) v_x =   (1-D_{xx})S_{ii}v_x + D_{xx}v_{d,x}
\end{align}
Rearranging the terms leads to:
\begin{align}
    \frac{v_{d,x}}{v_x} = \frac{\lambda(D) - (1-D_{xx})S_{xx}}{D_{xx}}
\end{align}
Multiply the fraction on the right hand side by $\mathbf{n}_t/\mathbf{n}_t$ we get:
\begin{align}
    \frac{v_{d,x}}{v_x} = \frac{\lambda(D)\mathbf{n}_t - (1-D_{xx})S_{xx}\mathbf{n}_t}{D_{xx}\mathbf{n}_t}    
\end{align}
 If we assume that the population structure had reached the stable stage distribution, which is the right leading eigen vector, then, we have:
 \begin{align}
         \frac{v_{d,x}}{v_x} = \frac{\mathbf{n}_{t+1} - (1-D_{xx})S_{xx}\mathbf{n}_t}{D_{xx}\mathbf{n}_t}  
 \end{align}
 
 since $\lambda(D)\mathbf{n}_t = \mathbf{n}_{t+1}$. Observe that the denominator is simply the number of emigrant from the patch and the numerator the immigrant, we thus have:
 \begin{align}
      \frac{v_{d,x}}{v_x} = \frac{\mathbf{i}}{\mathbf{o}}
 \end{align}
 
To investigate the selection gradient on dispersal on a patch $x$, we can calculate the elasticity of the matrix $M$ to the patch specific dispersal rate $D_{xx}$ \parencite[]{caswell1982OptimalLifeHistories, taylorHowMakeKin1996}. We have the selection gradient:
\begin{align}
    \frac{dW}{dD_{xx}} = \sum_{i,j} v_i\frac{dM_{ij}}{dD_{xx}}u_j = u_x\sum_{i}v_i\frac{dM_{ix}}{dD_{xx}}|_{D_{xx} = D_{xx}^*}
\end{align}
where $v$ is the reproductive value or the left leading eigen vector of the matrix population model, $u$ the right leading eigen vector or the stable stage distribution, $dW/dD_{xx}$ the selection gradient on dispersal probabiltity in patch $x$ and $D_{xx}^*$ the mean population dispersal rate at patch $x$. The right hand side expression can be expanded to:
\begin{align}
     u_x\sum_{i}v_i\frac{dM_{ix}}{dD_{xx}} = u_x ((-1)v_x + \sum_{i\neq x}T_{ix}v_i) = u_x (v_{d,x} - v_x) = ku_x(\mathbf{o} - \mathbf{i}).
\end{align}
\section{Eigenvalue Centrality and Dispersal at Low Cost}
\label{app:eigen}
When there are no cost of dispersal, all strategies that leads to influx and efflux matching exactly are ESS, provided they are feasible. If we focus on only the second equation of \ref{eq:ESS}, then ESS would be the solution of the following equation:
\begin{align}
    \mathbf{i} = T\mathbf{o}
\end{align}
Since there are no dispersal cost, all columns sums to 1 and therefore the leading eigenvalue of $T$ is 1. Thus, all strategies that leads to the efflux in a patch proportional to the leading eigen vector of $T$ are evolutionary stable. Crucially, if we treat landscape as nodes and construct a directed network and the transition matrix $T$ the adjacency matrix of the landscape network, then the eigenvector is also the eigenvector centrality of the patch. Thus, when there are no dispersal cost, a solution to the ESS equation is that the efflux from each patch is the same as their eigenvector centrality. In the case where there are small dispersal cost, the ESS efflux would be slightly deviate from the solution at no dispersal cost. Except there will only be a single solution where a patch have minimal dispersal and the rest slightly smaller to the scaled eigenvector.

\section{Habitat network ESS and Algorithm}
\label{Appendix: algo}
The algorithm to find an ESS in a habitat network written in R is given below. It takes in variable networkk which is the $T$ matrix in the main text. $q$ the vector that denotes habitat quality, or, the maximum number of individual that can make a decision to disperse, and $p_{min}$ the minimal probability to disperse, which can either be a scalar (applied across all patches) or a vector.

\begin{verbatim}
    ESS <- function(networkk, q, pmins, maxit = 1000, 
    exact = TRUE, tol = 1e-8, printflux = FALSE)
{
  size = nrow(networkk)
  q <- c(q)
  pmins <- c(pmins)
  if(length(q) == 1)
  {
    q <- rep(q, size)
  }
  else if(length(q) != size)
  {
    print("error, q is of the wrong size")
  }
  if(length(pmins) == 1)
  {
    pmins <- rep(pmins, size)
  }
  else if(length(pmins) != size)
  {
    print("error, pmin is of the wrong size")
  }
  p <- pmins
  for(i in 1:maxit)
  {
    efflux <- q*p
    influx <- networkk%*%efflux
    net <- influx - efflux
    if(all(net > 0))
    {
      print(i)
      break
    }
    if(!exact)
    {
      check = all(near(net[p<1& p > pmins], 0, tol = tol))
    }
    else
    {
      check = all((net[p<1& p > pmins] == 0))
    }
    if(check&all(net[p == 1] > 0) & all(net[p == pmins] < 0))
    {
      if(printflux)
      {
        print(net)
      }
      return(p)
      break
    }
    valid_ind <- which(p<1 & influx > efflux)
    new <- influx[valid_ind]/q[valid_ind]
    p[valid_ind] <- pmin(1, influx[valid_ind]/q[valid_ind])
    
  }
  print("ESS finding exceeded run time")
  print(net)
  print(p)
  return(c("exceeded"))
}
\end{verbatim}

\section{IBM}
\label{Appendix: IBM}
The individual based model consists of two patches, A and B, with 500 individuals each. Every individual is characterized by two numbers, $z_a$ and $z_b$, denoting their dispersal trait from patch A and B respectively. At each timestep, all individuals reproduce in accordance with the fecundity of each patch, 10 offspring each on patch A and 50 on patch B. For each of the offspring and each of the traits,  they mutate with probability 0.05. If a mutation happens, the trait is updated to $z'_i = z_i + \epsilon$ where $\epsilon\sim\mathcal{N}(0,0.05)$. In the dispersal phase, each offspring $j$ in patch $i$ disperses with probability $p_j = 0.2 + 0.8/(1 + \exp(-z_{j,i}))$ such that the probability of dispersal is bounded between 0.2 and 1. Of those that disperse, they die with probability k, set differently in each simulation ranging from 0.1 to 1. Those that stayed do not incur any mortality. Of the dispersers that survived, they join the local offspring pool of the other patch. 500 individuals are then chosen in each offspring pool to replace the parents. The simulation started with populations of all $z$s drawn from a uniform distribution between [0,1]. This is ran for 30000 generations. For each value of k, simulation were ran 15 times. The codes are attached below.
\begin{verbatim}
import numpy as np
import numpy.random as rnd
import matplotlib.pyplot as plt
import time
from joblib import Parallel, delayed
import pandas as pd
rng = rnd.default_rng()
na = 500
nb = 500
feca = 10
fecb = 50
mutrate = 0.05
mindisp = 0.2
maxdisp = 1
times = 30000
inputs = []
for i in np.linspace(0.1,1,10):
    k = i
    for i in range(15):
        inputs.append((na,nb,feca,fecb,mutrate,mindisp,maxdisp,k,times))


def ibm(parms):
    na = parms[0]
    nb = parms[1]
    feca = parms[2]
    fecb = parms[3]
    mutrate = parms[4]
    mindisp = parms[5]
    maxdisp = parms[6]
    k = parms[7]
    times = parms[8]
    disprange = maxdisp - mindisp
    s = 1-k
    def link(x):
        #x[x < mindisp] = mindisp
        #x[x > maxdisp] = maxdisp
        #return x
        return disprange/(1+np.exp(-x)) + mindisp
    noffa = na*feca
    noffb = nb*fecb
    amean = np.zeros(times)
    bmeana = np.zeros(times)
    bmeanb = np.zeros(times)
    bmean = np.zeros(times)
    ia = np.zeros(times)
    ib = np.zeros(times)
    oa = np.zeros(times)
    ob = np.zeros(times)
    popa = rnd.random([na,2])
    popb = rnd.random([nb,2])
    st = time.time()
    for i in range(times):
        #at = time.perf_counter()
        #reproduction
        offspringa = np.tile(popa, (feca, 1))
        offspringb = np.tile(popb, (fecb, 1))
        #bt = time.perf_counter()
        #mutation
        #muta = rnd.binomial(1,mutrate,[noffa,2]).astype(bool)
        muta = rng.random(size = [noffa,2]) < mutrate
        #mutb = rnd.binomial(1,mutrate,[noffb,2]).astype(bool)
        mutb = rng.random(size = [noffb,2]) < mutrate
        offspringa[muta] += rnd.normal(0,0.05,np.sum(muta))
        #offspringa[offspringa < mindisp] = mindisp
        #offspringa[offspringa > maxdisp] = maxdisp 
        offspringb[mutb] += rnd.normal(0,0.05,np.sum(mutb))
        #offspringb[offspringb < mindisp] = mindisp
        #offspringb[offspringb > maxdisp] = maxdisp
        #dispersal
        #disperseda = rnd.binomial(1, link(offspringa[:,0])).astype(bool)
        disperseda = rng.random(size = offspringa.shape[0]) < link(offspringa[:,0])
        #dispersedb = rnd.binomial(1, link(offspringb[:,1])).astype(bool)
        dispersedb = rng.random(size = offspringb.shape[0]) < link(offspringb[:,1])
        stayeda = offspringa[np.logical_not(disperseda),:]
        stayedb = offspringb[np.logical_not(dispersedb),:]
        dispa = offspringa[disperseda,:]
        oa[i] = dispa.shape[0]
        dispb = offspringb[dispersedb,:]
        ob[i] = dispb.shape[0]
        #ct = time.perf_counter()
        #mortality
        dispasurvived = rng.random(size = dispa.shape[0]) < s
        dispbsurvived = rng.random(size = dispb.shape[0]) < s
        dispa = dispa[dispasurvived,:]
        ib[i] = dispa.shape[0]
        dispb = dispb[dispbsurvived,:]
        ia[i] = dispb.shape[0]
        #dt = time.perf_counter()
        #combining to form competing offspring
        compa = np.concatenate((dispb, stayeda), axis=0)
        compb = np.concatenate((dispa, stayedb), axis=0)
        #sampling a new generation
        popa = compa[rng.choice(compa.shape[0], na, replace=False)]
        popb = compb[rng.choice(compb.shape[0], nb, replace=False)]
        #calculating summary stats
        totalpop = np.concatenate((popa,popb), axis = 0)
        amean[i] = np.mean(link(totalpop[:,0]))
        bmean[i] = np.mean(link(totalpop[:,1]))
        bmeanb[i] = np.mean(link(popb[:,1]))
        bmeana[i] = np.mean(link(popa[:,1]))
        #et = time.perf_counter()
        #print(et-at,bt-at,ct-bt,dt-ct,et-dt)
    et = time.time()
    print(et - st)
    #a1 = np.zeros(na)
    atraitsa = link(popa[:,0])
    atraitsb = link(popa[:,1])
    btraitsa = link(popb[:,0])
    btraitsb = link(popb[:,1])
    #b1 = np.zeros(nb)
    #for i in range(na):
    #    a1[i] = np.mean(link(atraitsa[np.arange(na) != i]))
    #for i in range(nb):
    #    b1[i] = np.mean(link(btraitsb[np.arange(nb) != i]))

    return {"k": k, "oa": oa[-1], "ob": ob[-1], "ia":ia[-1],
    "ib": ib[-1],"ameana": np.mean(atraitsa),"bmeanb":np.mean(btraitsb),
    "amean":amean[-1],"bmean":bmean[-1],
    "parms":parms, "vara":np.var(link(totalpop[:,0])),
    "varb": np.var(link(totalpop[:,1])), 
    "cova": np.sum((np.mean(atraitsa) - amean[-1])*(atraitsa - amean[-1]))/(na-1),
    "covb": np.sum((np.mean(btraitsb)-bmean[-1])*(btraitsb-bmean[-1]))/(nb-1), 
    "covab": np.sum((atraitsa - amean[-1])*(np.mean(btraitsa)-amean[-1]))/na,
    "covba": np.sum((btraitsb - bmean[-1])*(np.mean(atraitsb) - bmean[-1]))/nb}


results = Parallel(n_jobs = 15)(delayed(ibm)(input) for input in inputs)
pd.DataFrame(results).to_csv("C:/Users/user/Documents/R/master/ibmofficial.csv")
\end{verbatim}
\section{Evolution of Adaptive Dispersal in Continuous Landscape with Dispersal Kernel}
\label{app:conti}
In this model we will aim to derive the condition under which we will expect adaptive dispersal to evolve in a continuous landscape without kin selection when organisms disperse with a dispersal kernel in order to illustrate the power of our approach. We will work on an arbitrary landscape and derive only when minimal dispersal is not evolutionary stable, without aiming to find an ESS.

Consider a continuous landscape with varying carrying capacity. Suppose dispersal happen at fixed discrete timestep $\Delta t$ during which individuals disperse with an arbitrary dispersal kernel on a two dimensional landscape. In between two dispersive steps, density regulation and reproduction occurs to yield a constant population density function with respect to location $f(\mathbf{x})$. At each time step, suppose a minimum proportion $p_{min}$ chose to disperse, the efflux is thus:
\begin{align}
    \mathbf{o} = p_{min}f(\mathbf{x^*})
\end{align}
The dispersal kernel $\Gamma(r)$ is the density distribution of distance traveled by a dispersing young. We thus have the density of young dispersing from point $\mathbf{x}$ to $\mathbf{x^*}$ being:
\begin{align}
    flux_{\mathbf{x}\rightarrow\mathbf{x^*}} = f(\mathbf{x})p_{min}(1-k)\frac{\Gamma(\|\mathbf{x} - \mathbf{x^*}\|)}{2\pi\|\mathbf{x} - \mathbf{x^*}\|}
\end{align}
where $\|\mathbf{x} - \mathbf{x^*}\|$ denotes the distance between the two points and $k$ being the dispersal cost incurred by all dispersing individual. The dispersal kernel is divided through by $2\pi r$ because of all the individuals dispersed a distance $r$ away, it is spread between $2\pi r$ localities. To sum up all the influx is thus:
\begin{align}
    \mathbf{i} = (1-k)p_{min}\int_\Omega\frac{f(\mathbf{x})\Gamma(\|\mathbf{x} - \mathbf{x^*}\|)}{2\pi\|\mathbf{x} - \mathbf{x^*}\|}d\mathbf{x}
\end{align}
Where $\Omega$ is the collection of all points. If we collect all points $\mathbf{x}$ a distance $r$ away from our focal point $\mathbf{x^*}$ into a set $\Omega_{\mathbf{x^*}}(r)$ we can rewrite the integral to:
\begin{align}
    \mathbf{i} = p_{min}(1-k)\int_0^\infty\frac{\Gamma(r)}{2\pi r}\int_{\Omega_{\mathbf{x^*}}(r)}f(\mathbf{x})d\mathbf{x}dr
\end{align}
Where if we define:
\begin{align}
    g_{\mathbf{x}^*}(r) =\frac{\int_{\Omega_{\mathbf{x^*}}(r)}f(\mathbf{x})d\mathbf{x}}{\int_{\Omega_{\mathbf{x^*}}(r)}\mathbf{1}\,\,d\mathbf{x}} = \frac{\int_{\Omega_{\mathbf{x^*}}(r)}f(\mathbf{x})d\mathbf{x}}{2 \pi r}    
\end{align}
where $g_{\mathbf{x}^*}(r)$ is the mean productivity of points $r$ away from $\mathbf{x}^*$, we get:
\begin{align}
     \mathbf{i} = p_{min}(1-k)\int_0^\infty\Gamma(r)g_{\mathbf{x}^*}(r) dr
\end{align}
Since dispersal yields a net direct fitness benefit if $\mathbf{i}>\mathbf{o}$, organisms are selected to disperse if:
\begin{align}
    (1-k)\int_0^\infty\Gamma(r)g_{\mathbf{x}^*}(r) dr > f(\mathbf{x}^*)
\end{align}
This simply states, that dispersal is selected for in localities where the productivity of an average neighbouring point times the dispersal cost, weighted by the dispersal kernel, is greater than the current patch. Some insights can be gained by further analyzing the equation. Since $g_{\mathbf{x}^*}(r)$ is the expected productiveness of a set of point, it must be true that: $\max(f(\mathbf{x})) \geq g_{\mathbf{x}^*}(r)\geq \min(f(\mathbf{x}))$ for all $\mathbf{x}^*$ and $r$. Hence we have:
\begin{align}
\int_0^\infty\Gamma(r)\max(f(\mathbf{x}))dr = \max(f(\mathbf{x})) \geq\int_0^\infty\Gamma(r)g_{\mathbf{x}^*}(r)dr    
\end{align}
Similarly, we also have $\max(f(\mathbf{x})) \geq f(\mathbf{x}^*)\geq \min(f(\mathbf{x}))$. Thus if $\min(f(\mathbf{x})) > (1-k)\max(f(\mathbf{x}))$ we will then have:
\begin{align}
    f(\mathbf{x}^*) \geq\min(f(\mathbf{x})) > (1-k)\max(f(\mathbf{x}))\geq(1-k)\int_0^\infty\Gamma(r)g_{\mathbf{x}^*}(r)dr
\end{align}
Thus, we have a sufficient but not necessary conditions where minimum dispersal being a evolutionary stable strategy being:
\begin{align}
\frac{\min(f(\mathbf{x}))}{\max(f(\mathbf{x}))} > 1-k    
\end{align}
This means that if there are little variation across the landscape relative to the cost, dispersal will not evolve without kin selection. 

If we make some additional assumptions, we can generate more insight from this model. Specifically, let's assume the landscape is formed by a Gaussian process with spatial autocorrelation with a constant mean and variance. This means that if we randomly sample a point in the landscape, the productivity comes from a normal distribution with mean $\mu$ and standard deviation $\sigma$, and if we randomly sample two points a distance $r$ apart, they are correlated with correlation coefficient $\rho(r)$. From the assumption of gaussian process, we know that:
\begin{align}
    E(f(\mathbf{x}_2)|f(\mathbf{x}_1)) = \mu + \rho(\|\mathbf{x}_2 - \mathbf{x}_1\|)(\mathbf{x}_1-\mu)
\end{align}
Let's now define $\Omega_{s}$ as all the points $\mathbf{x}$ where $f(\mathbf{x}) = s$, we can then deduce that:
\begin{align}
    E_{\Omega_s}(g_x(r)) = \mu + \rho(r)(s-\mu)
\end{align}
Where $E_{\Omega_s}$ is the expected value over all points with habitat productivity of $s$. Thus:
\begin{align}
    E_{\Omega_s}(\mathbf{i}-\mathbf{o}) = p_{min}(1-k)\int_0^\infty\Gamma(r)(\mu + \rho(r)(s-\mu))dr - sp_{min}
\end{align}
Expanding and rearranging to get:
\begin{align}
    &E_{\Omega_s}(\mathbf{i}-\mathbf{o}) =\\& p_{min}(1-k)(\mu\int_0^\infty\Gamma(r)dr +(s-\mu)\int_0^\infty\Gamma(r)\rho(r)dr)-sp_{min}
\end{align}
Since $\Gamma(r)$ is the distribution of distance traveled of a dispersed individual conditioned on it's survival, it's integral must evaluate to 1 and thus we have:
\begin{align}
    &E_{\Omega_s}(\mathbf{i}-\mathbf{o}) =\\& p_{min}(1-k)(\mu+(s-\mu)\int_0^\infty\Gamma(r)\rho(r)dr)-sp_{min}
\end{align}
If substituting in:
\begin{align}
   I =  \int_0^\infty\Gamma(r)\rho(r)dr
\end{align}
Where $I$ is the expected autocorrelation between the local patch quality and an expected destination of disperser, we have the condition where the expected net flux being greater than or equal to $0$ is:
\begin{align}
    E_{\Omega_s}(\mathbf{i}-\mathbf{o}) \geq0\Leftrightarrow(I(1-k)-1)s \geq (1-k)(I-1)\mu
\end{align}
And since $-1\leq I\leq1$ as it is the expected value of a function ($\rho$) strictly between 1 and -1,  we have $(I(1-k)-1)<0$. For a landscape that is either non-homogeneous ($I < 1$) or with some cost of dispersal ($k>0$), we have:
\begin{align}
    E_{\Omega_s}(\mathbf{i}-\mathbf{o}) \geq0\Leftrightarrow s \leq \frac{(1-k)(I-1)}{I(1-k)-1}\mu
\end{align}
If we define:
\begin{align}
    Q\equiv\frac{(1-k)(I-1)}{I(1-k)-1}
\end{align}
than we have:
\begin{align}
    E_{\Omega_s}(\mathbf{i}-\mathbf{o}) \geq0\Leftrightarrow s \leq Q\mu
\end{align}
Thus, for all $s$ satisfying the inequality, it is guaranteed that at \textit{at least some} of the points with productivity $s$, minimal dispersal propensity is not evolutionarily stable on an infinitely large plain. Or equivalently, at points that satisfies the inequality dispersal is expected to be selected for. Note the inequality is not to say that minimal dispersal is stable where it does not hold, provided the dominant strategy is to disperse minimally everywhere else; rather, it is saying \textit{where it does hold}, we expect minimal dispersal to be unstable in \textit{some} of those points. 

To gain some further understanding, it pays to further analyze the fraction $Q$ and integral $I$. Whilst it is possible to do more analytical work, directly plotting out the fraction provides most intuition. Since we know $k$ lies between 0 and 1 and $I$ between -1 and 1, we can focus our analysis to those cases. By plotting the fraction $Q$ with respect to the expected correlation between the origin and destination of an disperser ($I$) with different values of dispersal cost ($k$), we get a set of concave, monotonically decreasing, curves which at $I = 0$ takes the value of $1-k$, and, regardless of k, always reduce to 0 at $I = 1$. It can thus be seen low level of autocorrelation, or indeed, negative autocorrelation can help the evolution of dispersal. This is intuitively pleasing as strong autocorrelation means that the nearby landscape is more similar to the current patch then by chance. There are thus little to gain by dispersing. On the other hand, if we have negative autocorrelation, just beside a low quality patch is likely a high quality one, thus dispersal can be favoured in these low quality patches.
\begin{figure}
    \centering
\includegraphics[width = 15cm]{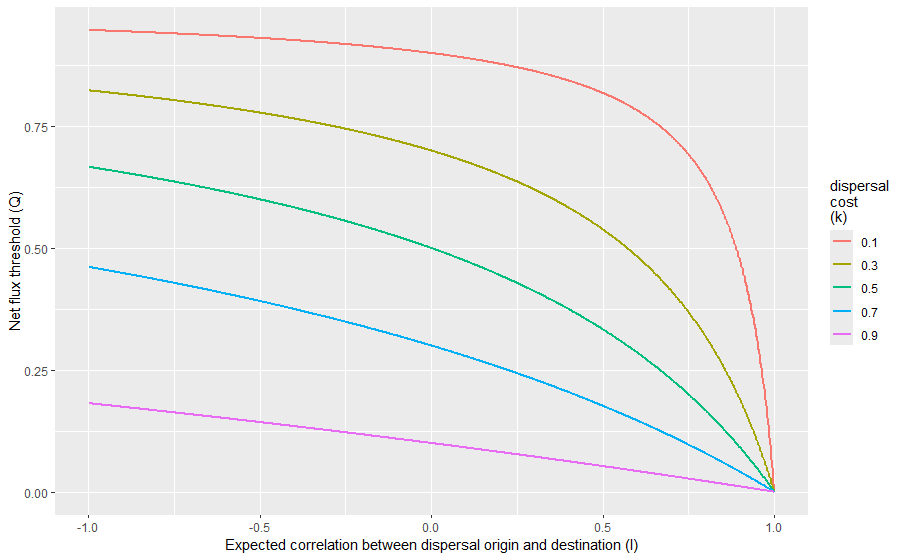}
    \caption{The relation ship between net flux threshold (Q) and expected correlation between qualities of origin and destination of dispersal at different dispersal cost (I). If a patch has lower quality than $Q\mu$ where $\mu$ is the mean quality on the landscape, then we expect the net flux is greater than zero when minimal dispersal across the landscape is the population strategy. This means that we expect adaptive dispersal to evolve at these locations.}
    \label{fig:dispersal_fraction}
\end{figure}

To also get a taste of what $I$ might look like, we may look at the case of an exponential dispersal kernel and a exponential decay in autocorrelation. Note whilst these two functions are chosen due to mathematical convenience, the insight we can gain is much more general and applies to other dispersal kernel and strictly positive autocorrelation function that decays with distance as well: more on that later. Suppose the dispersal kernel have the mean dispersal length of $\beta$, and autocorrelation have the characteristic length $\lambda$, we thus have:
\begin{align}
    \Gamma(r) &= \frac{1}{\beta} e^{-\frac{r}{\beta}}
    \\
    \rho(r)&=e^{-\frac{r}{\lambda}}
\end{align}
and:
\begin{align}
    I = \frac{1}{\beta}\int_0^\infty e^{-\frac{r}{\beta}}e^{-\frac{r}{\lambda}}dr = \frac{1}{\beta}\int_0^\infty e^{-(\frac{\beta + \lambda}{\beta\lambda}){r}}dr = \frac{\lambda}{\beta + \lambda}
\end{align}
Which means that, when dispersal and autocorrelation have similar characteristic length, we have $I\simeq0.5$. If individuals tend to disperse further than the autocorrelative structure of the landscape, then we have $\beta\gg \lambda$ and thus $I\simeq0$.  This matches our biological intuition where the disperser is effectively randomly sampling habitat across the landscape, in which case the average payoff provided it dispersed successfully is $\mu$, meaning it only pays to disperse if the current habitat is worse then $(1-k)\mu$. On the other hand, if the typical length of the landscape structure is much larger than the dispersal length, or $\lambda \gg \beta$ we have $I \simeq \lambda/\lambda = 1$ which the above graph suggest any none-zero cost would lead to dispersal being favoured not even in the worst patches. This again appeals to our intuition as if dispersal distances are much smaller than the landscape structure, then dispersing would not lead to much change to the focal individual. Thus, since it is a costly endeavor, it is selected against in the absent of kin selection. 

To see how the qualitative result is not an artifact of our choice of dispersal kernel and autocorrelation function, but rather a much more general pattern, one can appeal to our own biological intuition, as briefly outlined above. Or alternatively, one can consider the mathematical form of the integral. If we have the length scale of autocorrelation being much larger than the dispersal kernel, then the autocorrelation function would be close to 1 throughout the range in which the dispersal kernel have appreciable density, as such, the integral would be close to one; and conversely, if the dispersal kernel has a much larger scale then the autocorrelative structure, then throughout the range in which the dispersal kernel have appreciable density, the autocorrelation will be close to 0 except but a sliver in the extremely close range, where the overall density from the kernel is rather low. Resulting in the integral being close to 0.

It may also be of interest to look at autocorrelative structure that contains negative correlation, such is common in disturbed landscape \parencite{biswasDisturbanceIncreasesNegative2017}, which is increasingly pervasive. As before, for mathematical convenience a certain functional form is chosen though the qualitative result is general. Here we will use the same dispersal kernel with the autocorrelation function:
\begin{align}
    \rho(r) = e^{-\frac{r}{\lambda}}\cos(\frac{2\pi r}{a})
\end{align}
Where $\lambda$, as before, is the characteristic length scale for decay, and $a$ the length scale negative autocorrelation. For negative autocorrelation, periodic function is chosen since if the correlation between this patch and a patch $r$ away is negative, and so are the correlation between a patch $r$ and another $2r$ away. Thus the correlation between the current patch and ones $2r$ away should be positive. We then have:
\begin{align}
    I = \frac{1}{\beta}\int_0^\infty e^{-\frac{(\beta + \lambda)r}{\beta\lambda}}\cos(\frac{2\pi r}{a})dr = \frac{a^2\lambda(\beta+\lambda)}{a^2(\beta+\lambda)^2 + 4\pi^2\beta^2\lambda^2} 
\end{align}
If we then treat $\beta$ and $\lambda$ as fixed, we can look at the effect of scale of negative correlation on the evolution of dispersal. We have:
\begin{align}
    \lim_{a\rightarrow0}I &= 0\\
    \lim_{a\rightarrow\infty}I &= \frac{\beta\lambda}{\beta + \lambda}\\
    \frac{dI}{da}&=\frac{8a\beta^3\lambda^3(\beta+\lambda)\pi^2}{(a^2(\beta+\lambda)^2 + 4\pi^2\beta^2\lambda^2)^2} > 0
\end{align}
Thus, if the length scale of negative correlation is much smaller then that of dispersal or decay of autocorrelation, then the expected autocorrelation tends to zero. If on the other hand, the length scale of negative correlation is much larger than the scale of dispersal or the scale of autocorrelation, then the expected autocorrelation approaches the case where there are no negative autocorrelation. Further, expected autocorrelation is a increasing function of the length scale of negative autocorrelation, thus for intermediate scale, the value is somewhere in between. This again is intuitively pleasing, with the length scale of negative autocorrelation being very low, the neighbourhood is then essentially filled with both patches of very low and very high quality, as a result, the average quality of destination is effectively $\mu$. On the other hand if it is very high, than it's effect is not felt in the neighbourhood. FOr something in between, since some of the neighbouring patches will be negatively correlated, thus for a patch worse then average it is now more likely to have a higher quality patch around, and thus the evolution of dispersal is favoured. Though these results are in some cases particular to the equation we have chosen, indeed in this case we still have $I\geq0$ which is not necessarily true, the overall message where some negative autocorrelation at an appropriate scale facilitates the evolution of dispersal still holds. 
\section{Existence and Uniqueness of ESS}
\label{Appendix: proof}
A proof for the existence and uniqueness, using above algorithm, is given below. To make the proof easier instead of working with the minimal dispersal probability $p_{min}$ and dispersal probability vector $p$, the proof works with the equivalent minimal dispersal mass $o_{min} = q*p_{min}$, dispersal mass $o = p \odot q$ where $\odot$ denotes the element wise product and $o_{max} = q$

\begin{theorem}
Let $o_{\min}, o_{\max} \in \mathbb{R}^n$ with 
\[
0 < (o_{\min})_i \leq (o_{\max})_i \quad \text{for all } i = 1, \dots, n,
\]
and let $T \in \mathbb{R}^{n \times n}$ be a square transition matrix satisfying:
\begin{enumerate}
    \item $T$ is positive semidefinite and ergodic;
    \item Each column sum of $T$ is less than $1$ (hence the spectral radius of $T$ is $< 1$).
\end{enumerate}
An \emph{ESS} is defined as a vector $o \in \mathbb{R}^n$ such that $o_{\min,i} \leq o_i \leq o_{\max,i}$ for all $i$, and moreover
\[
\begin{cases}
o_i \geq (To)_i & \text{if } o_i = o_{\min,i}, \\[6pt]
o_i = (To)_i & \text{if } o_{\min,i} < o_i < o_{\max,i}, \\[6pt]
o_i \leq (To)_i & \text{if } o_i = o_{\max,i}.
\end{cases}
\]
\end{theorem}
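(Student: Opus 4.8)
The plan is to recast an ESS as a fixed point of a single self-map of the box $B := \prod_{i=1}^{n} [o_{\min,i}, o_{\max,i}]$, and then obtain existence and uniqueness together from the Banach fixed point theorem. Let $P : \mathbb{R}^n \to B$ be the coordinatewise projection onto the box, $P(x)_i = \min\{o_{\max,i}, \max\{o_{\min,i}, x_i\}\}$, and set $F := P \circ T$, so that $F(o)_i = \min\{o_{\max,i}, \max\{o_{\min,i}, (To)_i\}\}$, i.e. the componentwise median $\mathrm{mid}(o_{\min,i}, (To)_i, o_{\max,i})$. The first step I would carry out is the elementary case check that the three-case ESS condition together with $o_{\min,i} \le o_i \le o_{\max,i}$ is exactly equivalent to $o = F(o)$: if $(To)_i \le o_{\min,i}$ the median is $o_{\min,i}$ and one recovers the first line; if $o_{\min,i} < (To)_i < o_{\max,i}$ it is $(To)_i$ and one recovers the middle line; if $(To)_i \ge o_{\max,i}$ it is $o_{\max,i}$ and one recovers the third line, with the boundary cases consistent. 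So it suffices to show $F$ has a unique fixed point in $\mathbb{R}^n$, and since $\mathrm{range}(F) \subseteq B$ that fixed point automatically lies in $B$.

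Next I would equip $\mathbb{R}^n$ with the $\ell^1$ norm and show $F$ is a strict contraction for it. Two ingredients: (i) the box projection is nonexpansive, because each scalar clamp $t \mapsto \min\{o_{\max,i}, \max\{o_{\min,i}, t\}\}$ is $1$-Lipschitz, so $\|P(x) - P(y)\|_1 = \sum_i |P(x)_i - P(y)_i| \le \sum_i |x_i - y_i| = \|x - y\|_1$; and (ii) $T$ contracts $\ell^1$, since $\|Tx\|_1 \le \sum_i \sum_j T_{ij} |x_j| = \sum_j \big( \sum_i T_{ij} \big) |x_j| \le c\,\|x\|_1$ with $c := \max_j \sum_i T_{ij} < 1$ the largest column sum (using here that $T$ has nonnegative entries). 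Composing, $\|F(x) - F(y)\|_1 \le \|Tx - Ty\|_1 \le c\,\|x - y\|_1$, so $F$ is a $c$-contraction on the complete space $(\mathbb{R}^n, \|\cdot\|_1)$; Banach's theorem then yields a unique fixed point $o^\star$, which by the previous paragraph is the unique ESS. As a byproduct the iteration $o^{(k+1)} = F(o^{(k)})$ converges geometrically from any initial point, which also underpins the algorithmic computation in Appendix~\ref{Appendix: algo}.

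The step I expect to be the real obstacle is uniqueness, not existence. Existence on its own is cheap and does not touch the column-sum hypothesis at all: $F$ maps the compact convex set $B$ continuously into itself, so Brouwer applies, and $F$ is also order-preserving, so Tarski applies. Uniqueness is precisely where the hypothesis "column sums $< 1$" (equivalently $\rho(T) < 1$) must be spent, and the delicate point is choosing a norm in which the nonexpansive projection cannot undo the contractivity of $T$; the $\ell^1$ norm is the natural choice here because the hypothesis is phrased in terms of column sums. If one wants to run the argument from $\rho(T) < 1$ alone (the general ergodic case), I would keep in reserve the substitute of a weighted sup-norm $\|x\|_w = \max_i |x_i|/w_i$ with $w := (I-T)^{-1}\mathbf{1} \ge \mathbf{1} > 0$: then $Tw = w - \mathbf{1}$ gives $\|T\|_w = \max_i (1 - 1/w_i) < 1$, box projection remains nonexpansive coordinatewise, and the same Banach argument closes.
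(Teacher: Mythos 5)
Your proof is correct, and it takes a genuinely different route from the paper's. You recast the ESS conditions as the single fixed-point equation $o=F(o)$ with $F=P\circ T$, i.e.\ $F(o)_i=\min\{o_{\max,i},\max\{o_{\min,i},(To)_i\}\}$ (the coordinatewise median), verify the equivalence by a three-case check, and then get existence and uniqueness in one stroke from Banach's theorem, using that each scalar clamp is $1$-Lipschitz and that the induced $\ell^1$ operator norm of $T$ is its largest column sum $c<1$. The paper instead iterates the map $f(x)_i=\min\{o_{\max,i},\max\{x_i,(Tx)_i\}\}$ --- note the inner $\max$ is taken with $x_i$, not with $o_{\min,i}$, so its fixed-point set is strictly larger than the ESS set and must be intersected with an auxiliary admissible set $S$ --- proves existence via a monotone, bounded iteration from $o_{\min}$ in the Knaster--Tarski spirit, and proves uniqueness separately by showing that the difference $\Delta$ of two comparable ESS satisfies $\Delta\le T\Delta$, which the column-sum condition forces to vanish. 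Your argument is shorter, dispenses with the set $S$ entirely, and yields the stronger conclusion of geometric convergence of $o^{(k+1)}=F(o^{(k)})$ from \emph{any} initial point; your weighted sup-norm fallback with $w=(I-T)^{-1}\mathbf{1}$ is also a correct way to relax the hypothesis to $\rho(T)<1$ alone. What the paper's route buys in exchange is the identification of the ESS as the \emph{minimal} fixed point reached by monotonically raising dispersal from $o_{\min}$, which mirrors the structure of the algorithm in Appendix~\ref{Appendix: algo} and makes explicit that existence needs no spectral hypothesis at all. Two small caveats: the contraction iteration you describe is not literally the map the paper's algorithm iterates, so it supplies an alternative convergent algorithm rather than a justification of that one; and your equivalence argument implicitly reads the degenerate case $o_{\min,i}=o_{\max,i}$ as imposing no condition on $(To)_i$, which matches the paper's own reading but is worth a sentence.
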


Such ESS exists and is unique.
\subsection*{Roadmap}
\begin{description}
  \item[Step 1: Fixed Point Characterization.]
  Define the update map $f$ and the fixed point set $\mathrm{FP}$, and prove the coordinatewise characterization of $\mathrm{FP}$.

  \item[Step 2: The Admissible Set $S$.]
  Introduce $S$ and show $\{\mathrm{ESS}\}=S\cap \mathrm{FP}$, reducing the problem to finding fixed points of $f$ within $S$.

  \item[Step 3: Monotone Iteration and Convergence.]
  Prove that $S$ is closed, $T$ is isotone, $f$ is extensive and preserves $S$, and that the iteration $a_{n+1}=f(a_n)$ with $a_0=o_{\min}$ is monotone increasing, bounded, and converges to $a_\infty\in S$.

  \item[Step 4: Existence of an ESS.]
  Show $f$ is continuous on the order interval $[o_{\min},o_{\max}]$, pass to the limit in $a_{n+1}=f(a_n)$ to get $f(a_\infty)=a_\infty$, hence $a_\infty\in S\cap \mathrm{FP}$ is an ESS.

  \item[Step 5: Minimality via Monotonicity.]
  Prove $f$ is monotone: $x\le y \Rightarrow f(x)\le f(y)$. Deduce that the limit $a_\infty$ is the minimal ESS (coordinatewise) among all ESS.

  \item[Step 6: Uniqueness of the ESS.]
  For any two ESS $o,o'$ with $o'\ge o$, set $\Delta=o'-o\ge 0$ and show $\Delta\le T\Delta$. Using $\rho(T)<1$, conclude $\Delta=0$, hence $o'=o$. Therefore the ESS is unique.
\end{description}

\subsection*{Step 1: Fixed Point Characterization}

\begin{definition}
Define the function $f : \mathbb{R}^n \to \mathbb{R}^n$ by
\[
\bigl(f(x)\bigr)_i =
\begin{cases}
x_i, & \text{if } x_i \geq (Tx)_i, \\[6pt]
\min\bigl( (o_{\max})_i,\; (Tx)_i \bigr), & \text{if } x_i < (Tx)_i.
\end{cases}
\]
We denote by $\mathrm{FP}$ the set of fixed points of $f$:
\[
\mathrm{FP} := \{ x \in \mathbb{R}^n : f(x) = x \}.
\]
\end{definition}

\begin{lemma}[Equivalent characterization of $\mathrm{FP}$]
\label{lemma:FP}
The set $\mathrm{FP}$ admits the following equivalent description:
\[
\mathrm{FP} = \Bigl\{ x \in \mathbb{R}^n : x_i \geq (Tx)_i \text{ or } x_i = (o_{\max})_i \quad \forall i \Bigr\}.
\]
\end{lemma}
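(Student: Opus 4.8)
The plan is to prove the set equality coordinate by coordinate, establishing the two inclusions separately. Write $F := \{x \in \mathbb{R}^n : \text{for every } i,\ x_i \ge (Tx)_i \text{ or } x_i = (o_{\max})_i\}$ for the claimed description of $\mathrm{FP}$; the goal is to show $\mathrm{FP} = F$. Both directions are obtained by unwinding the piecewise definition of $f$ at a fixed index, so the proof is essentially bookkeeping on the two branches.

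First I would show $F \subseteq \mathrm{FP}$. Take $x \in F$ and fix an index $i$. If $x_i \ge (Tx)_i$, the first branch of the definition gives $(f(x))_i = x_i$ at once. Otherwise $x_i < (Tx)_i$, and membership in $F$ forces $x_i = (o_{\max})_i$; then $(f(x))_i = \min\bigl((o_{\max})_i, (Tx)_i\bigr) = \min\bigl(x_i, (Tx)_i\bigr) = x_i$, using $x_i < (Tx)_i$. Since $i$ was arbitrary, $f(x) = x$, i.e.\ $x \in \mathrm{FP}$.

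For the reverse inclusion $\mathrm{FP} \subseteq F$, take $x$ with $f(x) = x$ and fix $i$. If $x_i \ge (Tx)_i$ we already have the first disjunct of $F$, so assume $x_i < (Tx)_i$; then the second branch applies and $x_i = (f(x))_i = \min\bigl((o_{\max})_i, (Tx)_i\bigr)$. This minimum cannot equal $(Tx)_i$, since that would give $x_i = (Tx)_i$, contradicting $x_i < (Tx)_i$; hence $\min\bigl((o_{\max})_i, (Tx)_i\bigr) = (o_{\max})_i$, so $x_i = (o_{\max})_i$, which is the second disjunct. Thus $x \in F$, completing the proof.

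I do not anticipate a genuine obstacle here: the statement is a direct reformulation of the definition of $f$. The only point needing a moment of care is the reverse inclusion, where one must observe that the strict inequality $x_i < (Tx)_i$ rules out the minimum being attained at $(Tx)_i$, forcing it to be attained at $(o_{\max})_i$ (and in fact forcing $(o_{\max})_i < (Tx)_i$ strictly). No structural hypotheses on $T$ — positive semidefiniteness, ergodicity, or the column-sum bound — are needed for this lemma; those will only be invoked in the later steps on monotone convergence and uniqueness.
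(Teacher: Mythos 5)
Your proof is correct and follows essentially the same route as the paper's: both directions are obtained by unwinding the two branches of the definition of $f$ at each coordinate, with the only cosmetic difference being that you argue the inclusion $\mathrm{FP}\subseteq F$ directly by cases where the paper argues by contradiction. Your added observation that the strict inequality $x_i<(Tx)_i$ forces the minimum to be attained at $(o_{\max})_i$ is exactly the point the paper leaves implicit, and you are right that no hypotheses on $T$ are used here.
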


\begin{proof}
If $x$ satisfies the right-hand condition, then $f(x) = x$ by direct substitution, hence $x \in \mathrm{FP}$.  

Conversely, suppose $x \in \mathrm{FP}$ but $x \notin$ the right-hand set. Then there exists some index $j$ with $x_j \neq (o_{\max})_j$ and $x_j < (Tx)_j$. By the definition of $f$, we then have
\[
f(x)_j = \min\bigl( (o_{\max})_j, (Tx)_j \bigr) \neq x_j,
\]
contradicting $f(x) = x$. Therefore the two sets are equal.
\end{proof}

\subsection*{Step 2: The Admissible Set $S$}

\begin{definition}
Define the set
\[
S := \Bigl\{ x \in \mathbb{R}^n : (o_{\min})_i \leq x_i \leq (o_{\max})_i, \; (Tx)_i \geq x_i \;\; \text{whenever } x_i \neq (o_{\min})_i \Bigr\}.
\]
\end{definition}

\begin{proposition}
The set of all ESS coincides with the intersection
\[
\{\text{ESS}\} = S \cap \mathrm{FP}.
\]
\end{proposition}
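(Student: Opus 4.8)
The plan is to establish the set equality by proving the two inclusions separately, in each direction reasoning coordinate by coordinate according to whether $x_i$ sits at its lower bound $(o_{\min})_i$, strictly between the bounds, or at its upper bound $(o_{\max})_i$. The one nontrivial tool is the equivalent description of $\mathrm{FP}$ from Lemma \ref{lemma:FP}: $x \in \mathrm{FP}$ exactly when, for every $i$, either $x_i \ge (Tx)_i$ or $x_i = (o_{\max})_i$. Combined with the defining condition of $S$ — that $(Tx)_i \ge x_i$ whenever $x_i \ne (o_{\min})_i$ — this turns the whole statement into elementary inequalities, with no need for the spectral, monotonicity, or compactness facts that enter the later steps of the roadmap.

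For $\{\text{ESS}\} \subseteq S \cap \mathrm{FP}$: given an ESS $o$, the bounds $o_{\min} \le o \le o_{\max}$ are part of its definition. In each of the three ESS cases one has either $o_i \ge (To)_i$ (the cases $o_i = (o_{\min})_i$ and $(o_{\min})_i < o_i < (o_{\max})_i$) or $o_i = (o_{\max})_i$ (the remaining case), which is precisely the Lemma \ref{lemma:FP} criterion, so $o \in \mathrm{FP}$. Moreover, whenever $o_i \ne (o_{\min})_i$ we are necessarily in one of the latter two cases, each of which yields $(To)_i \ge o_i$, so $o \in S$.

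For $S \cap \mathrm{FP} \subseteq \{\text{ESS}\}$: given $x \in S \cap \mathrm{FP}$, the bound constraints come from membership in $S$, and I would verify the three-case ESS condition coordinatewise. If $x_i = (o_{\min})_i$ and $x_i < (o_{\max})_i$, the $\mathrm{FP}$ criterion forces $x_i \ge (Tx)_i$; if $x_i = (o_{\max})_i$ and $x_i > (o_{\min})_i$, the $S$ condition forces $(Tx)_i \ge x_i$; and if $(o_{\min})_i < x_i < (o_{\max})_i$, the two conditions together pin down $x_i = (Tx)_i$. These are exactly the three ESS inequalities, so $x$ is an ESS, and reassembling over all coordinates finishes the inclusion.

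The argument is bookkeeping rather than substance, so the only point demanding care — and the one I would flag as the mild obstacle — is the degenerate coordinates with $(o_{\min})_i = (o_{\max})_i$, where the three ESS cases overlap and, read literally, would demand $o_i = (To)_i$ while $S$ and $\mathrm{FP}$ impose no constraint on $(Tx)_i$ at such an $i$. Since a patch with coinciding bounds has a fixed, non-negotiable efflux, the appropriate reading is that the ESS condition is vacuous at that coordinate (equivalently, one may assume $(o_{\min})_i < (o_{\max})_i$ throughout), under which the case split above is exhaustive and both inclusions close cleanly.
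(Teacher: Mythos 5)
Your proof is correct and follows essentially the same coordinatewise case analysis as the paper's own argument for both inclusions. You are in fact slightly more careful in two places: for the reverse inclusion at $o_i = (o_{\max})_i$ you correctly derive $(To)_i \ge o_i$ from the $S$ condition (the paper's text mistakenly attributes this to $\mathrm{FP}$, which gives nothing useful there), and your remark about degenerate coordinates with $(o_{\min})_i = (o_{\max})_i$ flags a case the paper silently assumes away when it writes $o_i = o_{\min,i} \neq o_{\max,i}$.
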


\begin{proof}
We show both inclusions.

\medskip

($\subseteq$) Suppose $o$ is an ESS. By definition, $o_{\min,i} \leq o_i \leq o_{\max,i}$ for all $i$.  
Now check each case:
\begin{itemize}
    \item If $o_i = o_{\min,i}$, this satisfies the condition of $S$. The ESS condition requires $o_i \geq (To)_i$ satisfying the condition of $\mathrm{FP}$. 
    \item If $o_{\min,i} < o_i < o_{\max,i}$, the ESS condition requires $o_i = (To)_i$. Then in particular $(To)_i \geq o_i$, satisfying the condition for $S$. Moreover, $o_i \geq (To)_i$ also satisfies the condition of $\mathrm{FP}$.
    \item If $o_i = o_{\max,i}$, this automatically satisfies the condition for $\mathrm{FP}$. The ESS condition requires $o_i \leq (To)_i$ satisfying the condition of $S$.
\end{itemize}
Thus $o \in S \cap \mathrm{FP}$.

\medskip
\noindent
($\supseteq$) Suppose $o \in S \cap \mathrm{FP}$. Then by definition of $S$ we have
\[
o_{\min,i} \leq o_i \leq o_{\max,i}, \quad \text{and if } o_i \neq o_{\min,i}, \; (To)_i \geq o_i.
\]
From $\mathrm{FP}$ we also know that for each $i$, either $o_i \geq (To)_i$ or $o_i = o_{\max,i}$.

Now check the three ESS cases:
\begin{itemize}
    \item If $o_i = o_{\min,i}$, then the $\mathrm{FP}$ condition gives $o_i \geq (To)_i$ or $o_i = o_{\max,i}$. Since $o_i = o_{\min,i} \neq o_{\max,i}$, we must have $o_i \geq (To)_i$. Hence $o_i = o_{\min,i}$ implies $o_i \geq (To)_i$, which is exactly the ESS condition.
    \item If $o_{\min,i} < o_i < o_{\max,i}$, then by $S$ we know $(To)_i \geq o_i$, while by $\mathrm{FP}$ we know $o_i \geq (To)_i$. Thus we conclude $o_i = (To)_i$, which is the ESS condition.
    \item If $o_i = o_{\max,i}$, then by $\mathrm{FP}$ we have $o_i \geq (To)_i$, again matching the ESS condition.
\end{itemize}
Therefore $o$ satisfies all defining conditions of an ESS, so $o$ is an ESS.

\medskip
Since both inclusions hold, the sets are equal: $\{\text{ESS}\} = S \cap \mathrm{FP}$.
\end{proof}

\subsection*{Step 3: Monotone Iteration and Convergence}

\begin{lemma}
The set $S$ is closed.
\end{lemma}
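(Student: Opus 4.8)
The plan is to prove closedness via the sequential characterization: take an arbitrary sequence $\{x^{(k)}\}_{k\ge 1}\subseteq S$ with $x^{(k)}\to x^\ast$ in $\mathbb{R}^n$, and show $x^\ast\in S$. The box part of the definition is immediate, since the order interval $[o_{\min},o_{\max}]$ is a Cartesian product of closed intervals, hence closed; so $(o_{\min})_i\le x^\ast_i\le(o_{\max})_i$ for every $i$. The only real content is the conditional constraint ``$(Tx)_i\ge x_i$ whenever $x_i\neq(o_{\min})_i$'', whose hypothesis is an open condition, so one cannot simply pass it to the limit wholesale.

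The key step is to argue coordinatewise. Fix an index $i$. If $x^\ast_i=(o_{\min})_i$, there is nothing to verify for that coordinate. Otherwise $x^\ast_i\neq(o_{\min})_i$, so $|x^\ast_i-(o_{\min})_i|=\delta>0$; since $x^{(k)}_i\to x^\ast_i$, there is $K$ such that for all $k\ge K$ we have $|x^{(k)}_i-x^\ast_i|<\delta$, and in particular $x^{(k)}_i\neq(o_{\min})_i$. For those $k$ the membership $x^{(k)}\in S$ forces $(Tx^{(k)})_i\ge x^{(k)}_i$. Now use that $x\mapsto Tx$ is linear on $\mathbb{R}^n$, hence continuous, so $(Tx^{(k)})_i\to(Tx^\ast)_i$; letting $k\to\infty$ in $(Tx^{(k)})_i\ge x^{(k)}_i$ yields $(Tx^\ast)_i\ge x^\ast_i$. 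Combining the two cases over all $i$ shows $x^\ast$ satisfies every defining condition of $S$, so $x^\ast\in S$, and $S$ is closed.

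I do not expect a genuine obstacle here; the one point that needs care is exactly the observation just used — that although the constraint is only ``switched on'' away from $(o_{\min})_i$, a limit point lying away from $(o_{\min})_i$ is bounded away from it, which forces the tail of the sequence to be in the region where the inequality is enforced, so the inequality survives in the limit. An alternative, equally short route would be to write $S$ as a finite union over subsets $A\subseteq\{1,\dots,n\}$ of the sets $\{x\in[o_{\min},o_{\max}] : x_i=(o_{\min})_i\ \forall i\in A,\ (Tx)_i\ge x_i\ \forall i\notin A\}$, each of which is an intersection of closed half-spaces and hyperplanes with the box, hence closed; but the sequential argument above is cleaner and I would present that.
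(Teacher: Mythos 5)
Your proof is correct, and it takes a genuinely different route from the paper's. The paper works with the complement: it writes $S^c = nS_1 \cup nS_2 \cup nS_3$, where $nS_1, nS_2$ collect violations of the box bounds and $nS_3 := \{x : (Tx)_i < x_i \text{ for some } i\}$, and shows each piece is open. You instead use the sequential characterization and isolate exactly the delicate point: the hypothesis ``$x_i \neq (o_{\min})_i$'' of the conditional constraint is an open condition, so a limit point with $x^\ast_i \neq (o_{\min})_i$ is bounded away from $(o_{\min})_i$, which forces the tail of the sequence into the region where $(Tx)_i \ge x_i$ is actually enforced, and the inequality then survives the limit by continuity of $x \mapsto Tx$. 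This is more than a stylistic difference. As written, the paper's $nS_3$ drops the qualifier ``$x_i \neq (o_{\min})_i$'', so one only gets $S^c \subseteq nS_1 \cup nS_2 \cup nS_3$ and not equality: for instance $x = o_{\min}$ lies in $S$ vacuously, yet under the theorem's hypotheses ($o_{\min} > 0$, column sums of $T$ strictly below $1$) one has $\sum_i (To_{\min})_i < \sum_i (o_{\min})_i$, so $(To_{\min})_i < (o_{\min})_i$ for some $i$ and $o_{\min} \in nS_3$. Openness of that union therefore establishes closedness of a proper subset of $S$, not of $S$ itself. Your coordinatewise limit argument --- or, equivalently, your alternative decomposition of $S$ as the finite union over $A \subseteq \{1,\dots,n\}$ of the closed sets $\{x \in [o_{\min},o_{\max}] : x_i = (o_{\min})_i \ \forall i \in A,\ (Tx)_i \ge x_i \ \forall i \notin A\}$ --- repairs this, and is the version the paper should carry, since the closedness of the actual set $S$ is what is invoked later to place the limit $a_\infty$ of the monotone iteration in $S$.
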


\begin{proof}
Recall that a set $\Omega \subset \mathbb{R}^n$ is closed if its complement $\Omega^c$ is open.  

We can decompose the complement of $S$ as
\[
S^c = nS_1 \cup nS_2 \cup nS_3,
\]
where
\begin{align*}
nS_1 &:= \{ x \in \mathbb{R}^n : x_i < (o_{\min})_i \text{ for some } i \}, \\
nS_2 &:= \{ x \in \mathbb{R}^n : x_i > (o_{\max})_i \text{ for some } i \}, \\
nS_3 &:= \{ x \in \mathbb{R}^n : (Tx)_i < x_i \text{ for some } i \}.
\end{align*}

\noindent
It suffices to show each $nS_k$ is open:
\begin{itemize}
    \item For $nS_1$: if $x^* \in nS_1$, then there exists $j$ with $x^*_j < (o_{\min})_j$. Let $\delta = (o_{\min})_j - x^*_j > 0$. For any $x'$ with $\|x' - x^*\| < \delta$, we have $x'_j < (o_{\min})_j$, so $x' \in nS_1$. Thus $nS_1$ is open.
    \item For $nS_2$: the same reasoning applies with $x^*_j > (o_{\max})_j$.
    \item For $nS_3$: note that each map $x \mapsto (Tx)_i - x_i$ is continuous as a linear function of $x$. The set
    \[
    \{ x : (Tx)_i - x_i < 0 \}
    \]
    is an open half-space. Taking the union over all indices $i$, we obtain
    \[
    nS_3 = \bigcup_{i=1}^n \{ x : (Tx)_i - x_i < 0 \},
    \]
    which is open as a finite union of open sets.
\end{itemize}
Therefore $S^c$ is open, hence $S$ is closed.
\end{proof}

\begin{lemma}[Isotonicity of the transition map]
\label{lem:isotone}
 
If $x,y\in\mathbb{R}^n$ satisfy $x\le y$ coordinatewise, then $Tx\le Ty$ coordinatewise.
\end{lemma}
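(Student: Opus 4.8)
The plan is to reduce the whole statement to the single structural fact that $T$ has nonnegative entries. This holds because $T_{ij}$ is by construction a dispersal probability — the chance that an emigrant leaving patch $j$ arrives at patch $i$ — so $T_{ij}\ge 0$ for all $i,j$. (Note that it is entrywise nonnegativity, not the positive semidefiniteness hypothesis, that does the work here, so I would state $T\ge 0$ explicitly at the outset.) Given $x\le y$ coordinatewise, the first step is to introduce the difference vector $z:=y-x$, which by hypothesis satisfies $z_j\ge 0$ for every $j$.

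The second step is to compute $Tz$ coordinatewise: $(Tz)_i=\sum_{j=1}^n T_{ij}z_j$. Since each factor $T_{ij}\ge 0$ and each $z_j\ge 0$, every summand is nonnegative, hence $(Tz)_i\ge 0$ for all $i$, i.e. $Tz\ge 0$ coordinatewise. The third and final step is to invoke linearity: $Ty-Tx=T(y-x)=Tz\ge 0$, which is precisely $Tx\le Ty$.

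There is essentially no obstacle here; the lemma is a one-line consequence of $T\ge 0$, and its only purpose is to supply the monotonicity (isotonicity) input needed for the monotone-iteration argument in Step 3, where the iterates $a_{n+1}=f(a_n)$ must be shown to form an increasing sequence. The one point worth care in the write-up is simply making the entrywise-nonnegativity assumption on $T$ explicit, since it is not among the two numbered conditions in the theorem statement but is inherent in $T$ being a matrix of transition probabilities.
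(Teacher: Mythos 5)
Your proposal is correct and rests on exactly the same fact as the paper's proof, namely entrywise nonnegativity of $T$; the paper compares $\sum_j T_{ij}x_j$ and $\sum_j T_{ij}y_j$ directly rather than passing through the difference vector $z=y-x$, but this is a cosmetic distinction. Your side remark that nonnegativity (rather than positive semidefiniteness) is the operative hypothesis, and should be stated explicitly, is a fair and worthwhile observation.
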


\begin{proof}
Write $(Tx)_i=\sum_{j}T_{ij}x_j$ and $(Ty)_i=\sum_{j}T_{ij}y_j$. Since $T_{ij}\ge 0$ and $x_j\le y_j$ for each $j$,
\[
(Tx)_i=\sum_{j}T_{ij}x_j \;\le\; \sum_{j}T_{ij}y_j=(Ty)_i \quad \text{for all } i.
\]
\end{proof}

\begin{lemma}
\label{lem:f-extensive}
If $x\in S$, then $f(x)\ge x$ coordinatewise.
\end{lemma}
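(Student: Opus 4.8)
The plan is to verify the inequality one coordinate at a time, following the case split built into the definition of $f$. Fix an index $i \in \{1,\dots,n\}$. If $x_i \ge (Tx)_i$, then by definition $f(x)_i = x_i$, and the desired inequality $f(x)_i \ge x_i$ holds with equality; nothing further is needed in this branch.

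In the remaining branch $x_i < (Tx)_i$ we have $f(x)_i = \min\bigl((o_{\max})_i, (Tx)_i\bigr)$, so it suffices to show that each of the two quantities being minimised is at least $x_i$. For the second, this is immediate from the branch hypothesis, since $(Tx)_i > x_i$. For the first, I would invoke membership in $S$: by definition every element of $S$ lies in the order interval $[o_{\min}, o_{\max}]$, so $x_i \le (o_{\max})_i$. Hence $\min\bigl((o_{\max})_i,(Tx)_i\bigr) \ge x_i$, i.e.\ $f(x)_i \ge x_i$. Since $i$ was arbitrary, $f(x) \ge x$ coordinatewise.

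I do not expect any genuine obstacle here; the argument is essentially unwinding the definitions. The one point worth flagging is that the hypothesis $x \in S$ is used only through the upper bound $x_i \le (o_{\max})_i$ --- the more restrictive clause in the definition of $S$, namely that $(Tx)_i \ge x_i$ whenever $x_i \ne (o_{\min})_i$, plays no role in this particular lemma, even though it is precisely what is needed for the companion step showing $f$ maps $S$ into itself. Consequently the statement could equally be phrased for any $x$ in the order interval $[o_{\min}, o_{\max}]$; I would keep the hypothesis as $x \in S$ purely for uniformity with the surrounding development, where $S$ is the set one iterates over.
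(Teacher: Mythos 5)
Your proof is correct and follows essentially the same case analysis as the paper's: equality in the branch $x_i \ge (Tx)_i$, and in the other branch both arguments of the minimum are bounded below by $x_i$ using the branch hypothesis and the upper bound $x_i \le (o_{\max})_i$ from membership in $S$. Your observation that only the order-interval constraint of $S$ is needed is also accurate and consistent with how the paper later works on the box $\mathcal{B}$.
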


\begin{proof}
Fix $i\in\{1,\dots,n\}$. There are two cases by the definition of $f$.
\begin{itemize}
    \item If $x_i\ge (Tx)_i$, then $f(x)_i=x_i\ge x_i$.
    \item If $x_i<(Tx)_i$, then $f(x)_i=\min\{(o_{\max})_i,(Tx)_i\}$. Since $x\in S$ implies $(o_{\min})_i\le x_i\le (o_{\max})_i$, we have $(o_{\max})_i\ge x_i$. Also $x_i<(Tx)_i$ in this case. Hence $f(x)_i\ge x_i$.
\end{itemize}
Thus $f(x)\ge x$.
\end{proof}

\begin{proposition}[$f$ preserves $S$]
\label{prop:f-preserves-S}
If $x\in S$, then $f(x)\in S$.
\end{proposition}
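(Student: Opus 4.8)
The plan is to verify directly that $y := f(x)$ meets both defining conditions of $S$: the order bounds $(o_{\min})_i \le y_i \le (o_{\max})_i$ for every $i$, and the inequality $(Ty)_i \ge y_i$ at every coordinate $i$ with $y_i \ne (o_{\min})_i$. The order bounds come almost for free. The lower bound $y_i \ge (o_{\min})_i$ follows from Lemma \ref{lem:f-extensive}, which gives $f(x) \ge x$ coordinatewise, together with $x \in S$ (so $x_i \ge (o_{\min})_i$). The upper bound $y_i \le (o_{\max})_i$ is immediate from the definition of $f$: if $x_i \ge (Tx)_i$ then $y_i = x_i \le (o_{\max})_i$ again by $x\in S$, while if $x_i < (Tx)_i$ then $y_i = \min\{(o_{\max})_i,(Tx)_i\} \le (o_{\max})_i$.

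For the second condition I would fix a coordinate $i$ and split on the case of the definition of $f$ that produced $y_i$. The key preliminary observation is that from $x \le f(x) = y$ and Lemma \ref{lem:isotone} one has $(Tx)_i \le (Ty)_i$; this is the workhorse inequality. If $x_i < (Tx)_i$, then $y_i = \min\{(o_{\max})_i,(Tx)_i\} \le (Tx)_i \le (Ty)_i$, so the required inequality holds unconditionally. If instead $x_i \ge (Tx)_i$, then $y_i = x_i$; here there is nothing to prove unless $y_i \ne (o_{\min})_i$, i.e. unless $x_i \ne (o_{\min})_i$, and in that case membership $x \in S$ forces $(Tx)_i \ge x_i$. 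Combined with the case hypothesis $x_i \ge (Tx)_i$ this yields $x_i = (Tx)_i$, hence $(Ty)_i \ge (Tx)_i = x_i = y_i$, as needed.

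The one step requiring care, and the only place where the hypothesis $x \in S$ does real work beyond bookkeeping, is this last subcase: when $f$ leaves coordinate $i$ untouched, isotonicity of $T$ alone gives only $(Ty)_i \ge (Tx)_i$, which a priori could lie strictly below $x_i = y_i$; the gap is closed precisely by noting that $x \in S$ together with $y_i \ne (o_{\min})_i$ pins $(Tx)_i$ down to equal $x_i$. Everything else is a routine expansion of the definition of $f$, and at this stage no appeal to positive semidefiniteness, ergodicity, or the column-sum bound on $T$ is needed — only nonnegativity of the entries of $T$ (through Lemma \ref{lem:isotone}) and extensivity of $f$ on $S$ (Lemma \ref{lem:f-extensive}).
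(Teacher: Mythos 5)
Your proof is correct and follows essentially the same route as the paper's: the bounds are obtained from extensivity of $f$ on $S$ plus the definition of $f$, and the one-sided condition is verified by the same case split on which branch of $f$ produced the coordinate, using isotonicity of $T$ (via $f(x)\ge x$) as the workhorse and closing the only nontrivial subcase by noting that $x\in S$ together with the case hypothesis forces $(Tx)_i = x_i$. No differences of substance.
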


\begin{proof}
We verify the two defining properties of $S$.

\smallskip
\emph{Bounds.}
By Lemma~\ref{lem:f-extensive}, $f(x)_i\ge x_i\ge (o_{\min})_i$. Also by definition of $f$, 
$f(x)_i\le \min\{(o_{\max})_i,(Tx)_i\}\le (o_{\max})_i$. Hence 
\((o_{\min})_i\le f(x)_i\le (o_{\max})_i\) for all $i$.

\smallskip

Let $i$ be such that $f(x)_i\neq (o_{\min})_i$. We must show $(Tf(x))_i\ge f(x)_i$.

There are two cases.

\underline{Case 1:} $f(x)_i=x_i$ (which occurs exactly when $x_i\ge (Tx)_i$).  
If additionally $x_i\neq (o_{\min})_i$, then since $x\in S$ we have $(Tx)_i\ge x_i$. 
 By Lemma~\ref{lem:f-extensive}, $f(x)\ge x$;
hence Lemma~\ref{lem:isotone} gives $Tf(x)\ge Tx$, so
\[
(Tf(x))_i \;\ge\; (Tx)_i \;=\; x_i \;=\; f(x)_i.
\]
If instead $x_i=(o_{\min})_i$, then $f(x)_i=x_i=(o_{\min})_i$, contradicting the premise $f(x)_i\neq (o_{\min})_i$. 
Thus this subcase cannot occur.

\underline{Case 2:} $f(x)_i=\min\{(o_{\max})_i,(Tx)_i\}$ (which occurs when $x_i<(Tx)_i$).  
Lemma~\ref{lem:f-extensive} gives $f(x)\ge x$, so Lemma~\ref{lem:isotone} yields $Tf(x)\ge Tx$. Therefore
\[
(Tf(x))_i \;\ge\; (Tx)_i \;\ge\; \min\{(o_{\max})_i,(Tx)_i\} \;=\; f(x)_i.
\]

In both cases, $(Tf(x))_i\ge f(x)_i$ whenever $f(x)_i\neq (o_{\min})_i$. Hence $f(x)\in S$.
\end{proof}

\begin{lemma}
\label{lem:omin-in-S}
$o_{\min}\in S$.
\end{lemma}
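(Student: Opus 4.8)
The plan is to verify directly that $x = o_{\min}$ satisfies the two defining clauses of $S$, namely (i) that $x$ lies in the order interval $[o_{\min}, o_{\max}]$ coordinatewise, and (ii) that $(Tx)_i \ge x_i$ at every index $i$ for which $x_i \neq (o_{\min})_i$. Neither clause requires any real computation, so the main "work" is just noticing that the second clause is vacuous for this particular choice of $x$.

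First I would dispatch clause (i): for each $i$ we have $(o_{\min})_i \le (o_{\min})_i$ trivially, and $(o_{\min})_i \le (o_{\max})_i$ is exactly the standing hypothesis $0 < (o_{\min})_i \le (o_{\max})_i$. Hence $o_{\min}$ sits inside the order interval.

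Then for clause (ii), the key observation is that the inequality $(Tx)_i \ge x_i$ is only imposed at indices where $x_i$ differs from $(o_{\min})_i$. Taking $x = o_{\min}$, we have $x_i = (o_{\min})_i$ for every $i$, so there is no index at which the condition must be checked; it holds vacuously. Combining this with clause (i) yields $o_{\min} \in S$.

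The hard part, such as it is, is purely conceptual rather than technical: recognising that this trivial-looking lemma is the base case that makes the monotone iteration in Steps 3--4 well posed, since it lets the sequence be started at $a_0 = o_{\min} \in S$ and then remain in $S$ by Proposition \ref{prop:f-preserves-S}. I expect no genuine obstacle in the proof itself.
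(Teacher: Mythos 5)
Your proof is correct and follows exactly the same route as the paper's: the bounds clause holds by the standing hypothesis $(o_{\min})_i \le (o_{\max})_i$, and the one-sided condition is vacuous since every coordinate of $o_{\min}$ equals $(o_{\min})_i$. Nothing further is needed.
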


\begin{proof}
Clearly $(o_{\min})_i\le (o_{\max})_i$ for all $i$, so the bounds hold. 
For the one-sided condition in the definition of $S$, note that there is no index with $x_i\neq (o_{\min})_i$ when $x=o_{\min}$; hence the requirement is vacuous.
\end{proof}

\begin{proposition}[Monotone bounded iteration and convergence in $S$]
\label{prop:an-converges-in-S}
Define $a_0=o_{\min}$ and $a_{n+1}=f(a_n)$ for $n\ge 0$. Then:
\begin{enumerate}
    \item $a_n\in S$ for all $n$;
    \item $a_{n+1}\ge a_n$ coordinatewise for all $n$;
    \item $a_n\le o_{\max}$ coordinatewise for all $n$;
    \item $a_n$ converges coordinatewise to a limit $a_\infty\in S$.
\end{enumerate}
\end{proposition}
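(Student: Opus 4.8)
The plan is to prove claim (1) by induction on $n$, feeding in the structural lemmas already established, then to read off claims (2) and (3) as immediate consequences at each stage, and finally to obtain claim (4) by a coordinatewise monotone-convergence argument together with the closedness of $S$.

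For the induction establishing (1): the base case $a_0 = o_{\min}\in S$ is exactly Lemma~\ref{lem:omin-in-S}. For the inductive step, suppose $a_n\in S$; then $a_{n+1} = f(a_n)\in S$ by Proposition~\ref{prop:f-preserves-S}. Hence $a_n\in S$ for every $n$. Claims (2) and (3) then require no further work. Since $a_n\in S$, the box condition built into the definition of $S$ gives $(o_{\min})_i\le (a_n)_i\le (o_{\max})_i$ for all $i$, which is (3); and applying Lemma~\ref{lem:f-extensive} to $a_n\in S$ yields $a_{n+1} = f(a_n)\ge a_n$ coordinatewise, which is (2).

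For (4), fix a coordinate $i$. By (2) the real sequence $\bigl((a_n)_i\bigr)_{n\ge 0}$ is nondecreasing, and by (3) it is bounded above by $(o_{\max})_i$; hence by the monotone convergence theorem it converges to a real number which we call $(a_\infty)_i$. Since this holds for each $i$, we get $a_n\to a_\infty$ coordinatewise, which on $\mathbb{R}^n$ is the same as convergence in norm. Because $a_n\in S$ for all $n$ and $S$ is closed (the earlier closedness lemma), the limit satisfies $a_\infty\in S$.

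The argument is essentially bookkeeping layered on the structural lemmas already in place, so there is no serious obstacle; the only point needing care is the final step, where one must pass the defining constraints of $S$ (in particular the one-sided inequality $(Tx)_i\ge x_i$ on indices where $x_i\neq(o_{\min})_i$) through the limit. This is precisely what closedness of $S$ buys us, and since coordinatewise and norm convergence coincide in finite dimensions there is no uniformity issue to worry about. One minor subtlety worth flagging explicitly in the writeup is that claims (2) and (3) at stage $n$ depend on having already shown $a_n\in S$, so it is cleanest to run the induction on (1) alone and treat (2)--(3) as corollaries, exactly as above.
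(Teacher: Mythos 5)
Your proof is correct and takes essentially the same route as the paper's: induction via Lemma~\ref{lem:omin-in-S} and Proposition~\ref{prop:f-preserves-S} for (1), Lemma~\ref{lem:f-extensive} for (2), and coordinatewise monotone convergence plus closedness of $S$ for (4). Your derivation of (3) from the box condition in the definition of $S$ is, if anything, slightly cleaner than the paper's appeal to the definition of $f$, which implicitly still needs $a_n\le o_{\max}$ in the case $f(a_n)_i=a_{n,i}$.
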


\begin{proof}
(1) By Lemma~\ref{lem:omin-in-S}, $a_0\in S$. If $a_n\in S$, then by Proposition~\ref{prop:f-preserves-S}, $a_{n+1}=f(a_n)\in S$. Induction yields $a_n\in S$ for all $n$.

(2) By Lemma~\ref{lem:f-extensive}, $a_{n+1}=f(a_n)\ge a_n$.

(3) From the definition of $f$, for each $i$,
\(
a_{n+1,i}=f(a_n)_i\le (o_{\max})_i
\)
and trivially $a_{0,i}=(o_{\min})_i\le (o_{\max})_i$. Hence $a_n\le o_{\max}$ for all $n$.

(4) Each coordinate sequence $\{a_{n,i}\}_{n\ge 0}$ is monotone increasing and bounded above by $(o_{\max})_i$, hence convergent in $\mathbb{R}$. Let $a_{\infty,i}=\lim_{n\to\infty}a_{n,i}$ and $a_\infty=(a_{\infty,i})_{i=1}^n$. Since $S$ is closed (proved earlier), the limit $a_\infty\in S$.
\end{proof}

\subsection*{Step 4: Existence of an ESS}
\begin{lemma}[Continuity of $f$ on the order interval]
\label{lem:f-continuous-box}
Let
\[
\mathcal{B}\;:=\;\{\,x\in\mathbb{R}^n:\ o_{\min}\le x\le o_{\max}\ \text{coordinatewise}\,\}.
\]
On $\mathcal{B}$ we have, for each $i$,
\[
f(x)_i \;=\; \min\!\bigl\{(o_{\max})_i,\; \max\{x_i,(Tx)_i\}\bigr\}.
\]
Consequently $f$ is continuous on $\mathcal{B}$, being a composition of continuous maps (the identity, the linear map $x\mapsto Tx$, and coordinatewise min/max with a fixed value). Since $S\subseteq \mathcal{B}$ trivially, $f$ is continuous on $S$ as well.
\end{lemma}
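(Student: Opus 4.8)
The plan is to establish the closed-form identity $f(x)_i = \min\{(o_{\max})_i,\ \max\{x_i,(Tx)_i\}\}$ on $\mathcal{B}$ by a two-case analysis that mirrors the piecewise definition of $f$, and then to read off continuity from the fact that the right-hand side is assembled entirely from continuous operations.

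First I would fix an index $i$ and split on the sign of $x_i - (Tx)_i$, exactly as in the definition of $f$. In the case $x_i \ge (Tx)_i$ we have $\max\{x_i,(Tx)_i\} = x_i$, so the candidate expression reduces to $\min\{(o_{\max})_i, x_i\}$; since $x \in \mathcal{B}$ forces $x_i \le (o_{\max})_i$, this is simply $x_i$, which agrees with the first branch of $f$. In the case $x_i < (Tx)_i$ we have $\max\{x_i,(Tx)_i\} = (Tx)_i$, so the candidate expression is $\min\{(o_{\max})_i,(Tx)_i\}$, which is exactly the second branch of $f$. Hence the two formulas coincide at every coordinate on all of $\mathcal{B}$.

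With the identity in hand, continuity is immediate: $x \mapsto x_i$ and $x \mapsto (Tx)_i$ are continuous (the latter being linear), the binary operations $\max$ and $\min$ are continuous on $\mathbb{R}^2$, and the constant $(o_{\max})_i$ is trivially continuous, so each coordinate $f(\cdot)_i$ is a composition of continuous maps and therefore continuous on $\mathcal{B}$; thus $f$ is continuous on $\mathcal{B}$. Since $S \subseteq \mathcal{B}$ by the bound clause in the definition of $S$, the restriction of $f$ to $S$ is continuous as well, which is the form in which the lemma is used when passing to the limit in $a_{n+1}=f(a_n)$.

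The only point that needs care — and the nearest thing to an obstacle — is the seam $x_i = (Tx)_i$ where the defining cases of $f$ overlap: there the first branch assigns $x_i$, and one must check this equals $\min\{(o_{\max})_i,\ \max\{x_i,(Tx)_i\}\} = \min\{(o_{\max})_i, x_i\}$, which holds precisely because we have restricted to $\mathcal{B}$ so that $x_i \le (o_{\max})_i$. Outside $\mathcal{B}$ this agreement can fail (when $x_i > (o_{\max})_i$), which is exactly why the lemma is stated on the order interval; within $\mathcal{B}$ the reformulation eliminates the case split and makes the continuity claim transparent.
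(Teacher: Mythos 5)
Your proof is correct and follows essentially the same route as the paper's: the same two-case verification of the closed-form identity (using $x_i \le (o_{\max})_i$ on $\mathcal{B}$ in the first case) followed by continuity via composition of continuous maps. The remark about why the identity can fail outside $\mathcal{B}$ is a nice addition but does not change the argument.
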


\begin{proof}
Fix $x\in\mathcal{B}$ and $i\in\{1,\dots,n\}$. Because $x\le o_{\max}$ on $\mathcal{B}$, we have
\[
\min\!\bigl\{(o_{\max})_i,\; \max\{x_i,(Tx)_i\}\bigr\}
=
\begin{cases}
x_i, & \text{if } x_i\ge (Tx)_i,\\[4pt]
\min\!\bigl\{(o_{\max})_i,(Tx)_i\bigr\}, & \text{if } x_i<(Tx)_i,
\end{cases}
\]
which matches the definition of $f(x)_i$ on $\mathcal{B}$. Hence the displayed identity holds.

For continuity, note that:
(i) $x\mapsto x$ (identity) is continuous; 
(ii) $x\mapsto Tx$ is continuous as $T$ is linear; 
(iii) $(u,v)\mapsto \max\{u,v\}$ and $(u,v)\mapsto \min\{u,v\}$ are continuous on $\mathbb{R}^2$ (e.g.\ $\max\{u,v\}=\tfrac12(u+v+|u-v|)$ and $\min\{u,v\}=\tfrac12(u+v-|u-v|)$).
Therefore, each coordinate map
\[
x\ \longmapsto\ \min\!\bigl\{(o_{\max})_i,\; \max\{x_i,(Tx)_i\}\bigr\}
\]
is continuous on $\mathcal{B}$, and hence $f$ is continuous on $\mathcal{B}$. Since $S\subseteq\mathcal{B}$, the restriction $f\restriction_S$ is continuous.
\end{proof}

\begin{proposition}[Limit point is a fixed point: order argument]
\label{prop:limit-is-FP-order}
Let $a_0=o_{\min}$ and $a_{n+1}=f(a_n)$, and let $a_\infty=\lim_{n\to\infty}a_n$ as in Proposition~\ref{prop:an-converges-in-S}. Then $a_\infty\in\mathrm{FP}$.
\end{proposition}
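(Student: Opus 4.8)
The plan is to verify $a_\infty\in\mathrm{FP}$ via the coordinatewise description of $\mathrm{FP}$ provided by Lemma~\ref{lemma:FP}: it is enough to show that for every index $i$, either $(a_\infty)_i\ge (Ta_\infty)_i$ or $(a_\infty)_i=(o_{\max})_i$. One cheap route is to invoke Lemma~\ref{lem:f-continuous-box} directly: all iterates $a_n$ and the limit $a_\infty$ lie in $S\subseteq\mathcal{B}$, where $f$ is continuous, so passing to the limit in $a_{n+1}=f(a_n)$ yields $a_\infty=f(a_\infty)$. However, since the proposition is billed as an order argument, I would instead give a proof that uses only the monotonicity and boundedness of the iteration from Proposition~\ref{prop:an-converges-in-S} together with continuity of the single linear functional $x\mapsto (Tx)_i$, which is more robust and does not lean on joint continuity of $f$.

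The order argument would run as follows. Fix an index $i$. If $(a_\infty)_i=(o_{\max})_i$ the desired condition already holds, so assume $(a_\infty)_i<(o_{\max})_i$ and suppose, toward a contradiction, that $(a_\infty)_i<(Ta_\infty)_i$. The key step is to note that $(Ta_n)_i=\sum_j T_{ij}a_{n,j}\to\sum_j T_{ij}(a_\infty)_j=(Ta_\infty)_i$ by coordinatewise convergence of $a_n$, while $a_{n,i}\le (a_\infty)_i<(Ta_\infty)_i$ for all $n$. Hence there is an $N$ such that, for all $n\ge N$, both $a_{n,i}<(Ta_n)_i$ and $(Ta_n)_i>(a_\infty)_i$. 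For such $n$ the second branch in the definition of $f$ applies, so $a_{n+1,i}=\min\{(o_{\max})_i,(Ta_n)_i\}$; since $(o_{\max})_i>(a_\infty)_i$ and $(Ta_n)_i>(a_\infty)_i$, this forces $a_{n+1,i}>(a_\infty)_i$. That contradicts the fact that $\{a_{n,i}\}$ is nondecreasing with limit $(a_\infty)_i$, so in fact $(a_\infty)_i\ge (Ta_\infty)_i$. As $i$ was arbitrary, Lemma~\ref{lemma:FP} gives $a_\infty\in\mathrm{FP}$.

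The only place requiring any care is the ``for large $n$'' step: one must check that the strict limit inequality $(a_\infty)_i<(Ta_\infty)_i$ genuinely propagates to $a_{n,i}<(Ta_n)_i$ and $(Ta_n)_i>(a_\infty)_i$ once $n$ exceeds some threshold, which follows at once from $(Ta_n)_i\to(Ta_\infty)_i$ and $a_{n,i}\to(a_\infty)_i$ by taking a margin equal to, say, half the gap $(Ta_\infty)_i-(a_\infty)_i$. I do not expect any genuine obstacle here: the content of the proposition is just the familiar principle that a monotone bounded orbit of a map continuous on the relevant order interval converges to a fixed point, specialized to the two-branch form of $f$ and the characterization of $\mathrm{FP}$ in Lemma~\ref{lemma:FP}.
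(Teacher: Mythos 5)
Your argument is correct, and it is a genuine (if mild) variant of the paper's proof. Both proofs share the same skeleton: the iterates satisfy $a_n\le a_\infty$ by monotonicity, and one derives a contradiction by showing that if $a_\infty$ failed the fixed-point condition, some later iterate would have to overshoot the limit in some coordinate. The difference lies in the machinery. The paper argues at the level of the map $f$ itself: it uses Lemma~\ref{lem:f-extensive} (extensiveness of $f$ on $S$) to produce a coordinate $i$ with $f(a_\infty)_i>a_{\infty,i}$, and then Lemma~\ref{lem:f-continuous-box} (continuity of $f$ on the order interval) to transfer that strict gap to $f(a_n)_i=a_{n+1,i}$ for large $n$. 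You instead verify membership in $\mathrm{FP}$ coordinatewise via Lemma~\ref{lemma:FP}, split on whether $(a_\infty)_i=(o_{\max})_i$, and use only the continuity of the linear functional $x\mapsto(Tx)_i$ together with the explicit two-branch form of $f$ to force $a_{n+1,i}=\min\{(o_{\max})_i,(Ta_n)_i\}>(a_\infty)_i$ for large $n$. Your route is slightly more self-contained (it does not need the joint continuity of $f$ or the extensiveness lemma at this point, only the monotone convergence from Proposition~\ref{prop:an-converges-in-S} and linearity of $T$), at the cost of a case split and of reasoning directly about the branches of $f$; the paper's version is shorter once its two auxiliary lemmas are in place and reuses them elsewhere. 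One small detail you handled correctly but should keep explicit if you write this up: the strict inequality $a_{n,i}<(Ta_n)_i$ needed to land in the second branch follows from $a_{n,i}\le(a_\infty)_i<(Ta_\infty)_i$ and $(Ta_n)_i\to(Ta_\infty)_i$, and the conclusion $a_{n+1,i}>(a_\infty)_i$ uses both $(o_{\max})_i>(a_\infty)_i$ (your standing case assumption) and the margin on $(Ta_n)_i$; no gap remains.
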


\begin{proof}
By Proposition~\ref{prop:an-converges-in-S}, $(a_n)$ is coordinatewise nondecreasing, contained in $S$, and converges to $a_\infty\in S$. In particular,
\begin{equation}
\label{eq:below-limit}
a_n \le a_\infty \quad \text{coordinatewise for all } n.
\end{equation}
Assume, for a contradiction, that $f(a_\infty)\neq a_\infty$. Since $a_\infty\in S$, Lemma~\ref{lem:f-extensive} gives $f(a_\infty)\ge a_\infty$, so there exists an index $i$ with
\[
f(a_\infty)_i \;>\; a_{\infty,i}.
\]
Set $\gamma := f(a_\infty)_i - a_{\infty,i} > 0$. By Lemma~\ref{lem:f-continuous-box} and $a_n\to a_\infty$ in $S$, there exists $N$ such that for all $n\ge N$,
\[
\bigl|f(a_n)_i - f(a_\infty)_i\bigr| < \gamma/2.
\]
Hence, for $n\ge N$,
\[
a_{n+1,i}\;=\; f(a_n)_i \;>\; f(a_\infty)_i - \gamma/2 \;=\; a_{\infty,i} + \gamma/2 \;>\; a_{\infty,i},
\]
which contradicts \eqref{eq:below-limit} (applied to $a_{n+1}$). Therefore $f(a_\infty)=a_\infty$, i.e.\ $a_\infty\in\mathrm{FP}$.
\end{proof}

\begin{corollary}[Existence of an ESS]
\label{cor:ESS-exists-order}
Since $a_\infty\in S$ (Proposition~\ref{prop:an-converges-in-S}) and $a_\infty\in\mathrm{FP}$ (Proposition~\ref{prop:limit-is-FP-order}), the characterization 
$\{\text{ESS}\}=S\cap\mathrm{FP}$ implies that $a_\infty$ is an ESS.
\end{corollary}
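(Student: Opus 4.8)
The plan is to prove existence by reading off an ESS as the limit of the monotone iteration already constructed, and then prove uniqueness in two moves: first that this particular limit is the coordinatewise-smallest ESS, and second that no ESS can sit strictly above it, using $\rho(T)<1$.

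\textbf{Existence.} This is essentially assembled from the lemmas above. Starting from $a_0=o_{\min}$, Lemma~\ref{lem:omin-in-S} gives $a_0\in S$, Proposition~\ref{prop:f-preserves-S} keeps every iterate $a_{n+1}=f(a_n)$ in $S$, Lemma~\ref{lem:f-extensive} makes the sequence coordinatewise nondecreasing, and the definition of $f$ bounds it above by $o_{\max}$; since $S$ is closed, Proposition~\ref{prop:an-converges-in-S} yields a limit $a_\infty\in S$. Using the closed form $f(x)_i=\min\{(o_{\max})_i,\max\{x_i,(Tx)_i\}\}$ valid on the order interval $\mathcal{B}$ (Lemma~\ref{lem:f-continuous-box}), $f$ is continuous there, so passing to the limit in $a_{n+1}=f(a_n)$ gives $f(a_\infty)=a_\infty$, i.e. $a_\infty\in\mathrm{FP}$. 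By the characterization $\{\text{ESS}\}=S\cap\mathrm{FP}$ (Step 2), $a_\infty$ is an ESS, which is exactly Corollary~\ref{cor:ESS-exists-order}.

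\textbf{Minimality of $a_\infty$.} Next I would record that $f$ is isotone on $\mathcal{B}=\{x:o_{\min}\le x\le o_{\max}\}$: by Lemma~\ref{lem:isotone} and the closed form $f(x)_i=\min\{(o_{\max})_i,\max\{x_i,(Tx)_i\}\}$, if $x\le y$ in $\mathcal{B}$ then $\max\{x_i,(Tx)_i\}\le\max\{y_i,(Ty)_i\}$, and taking $\min$ with the fixed value $(o_{\max})_i$ preserves the inequality, so $f(x)\le f(y)$. Now let $o$ be an arbitrary ESS. Since $\{\text{ESS}\}=S\cap\mathrm{FP}$, we have $f(o)=o$ and $o\ge o_{\min}=a_0$; applying the isotone map $n$ times gives $a_n=f^n(a_0)\le f^n(o)=o$, and letting $n\to\infty$ yields $a_\infty\le o$. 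Hence $a_\infty$ is the coordinatewise-minimal ESS.

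\textbf{Uniqueness.} Let $o'$ be any ESS and set $\Delta=o'-a_\infty\ge 0$. I claim $\Delta\le T\Delta$ coordinatewise. If $\Delta_i=0$ this is immediate since $T\ge 0$ and $\Delta\ge 0$. If $\Delta_i>0$, then $o'_i>a_{\infty,i}\ge o_{\min,i}$, so $o'_i\neq o_{\min,i}$ and the ESS conditions for $o'$ force $(To')_i\ge o'_i$; likewise $a_{\infty,i}<o'_i\le o_{\max,i}$, so $a_{\infty,i}\neq o_{\max,i}$ and the ESS conditions for $a_\infty$ force $(Ta_\infty)_i\le a_{\infty,i}$; subtracting gives $(T\Delta)_i=(To')_i-(Ta_\infty)_i\ge o'_i-a_{\infty,i}=\Delta_i$. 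Iterating $\Delta\le T\Delta$ gives $0\le\Delta\le T^n\Delta$ for all $n$; since all column sums of $T$ are below $1$ we have $\|T\|_1<1$, hence $\rho(T)<1$ and $T^n\to 0$, forcing $\Delta=0$, i.e. $o'=a_\infty$. Thus every ESS equals $a_\infty$, establishing uniqueness.

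\textbf{Main obstacle.} The existence half is bookkeeping over the lemmas already in place; the crux is the uniqueness argument. The delicate point is the coordinatewise inequality $\Delta\le T\Delta$: for each index one must pair $o'$ and $a_\infty$ with the correct ESS regime (boundary versus interior) simultaneously, so that the one-sided ESS inequalities line up in opposite directions and subtract cleanly. One also has to be careful that $f$'s monotonicity is only asserted on $\mathcal{B}$, not on all of $\mathbb{R}^n$ — harmless here because the iteration and every ESS lie inside $\mathcal{B}$. The concluding contraction step is where hypothesis (2), column sums below $1$, is genuinely used; ergodicity and positive semidefiniteness play no role in this particular statement.
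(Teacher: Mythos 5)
Your existence argument is correct and follows the paper's own route for this corollary exactly: the iteration from $o_{\min}$ stays in $S$, is monotone and bounded above by $o_{\max}$, converges to $a_\infty\in S$ since $S$ is closed, and continuity of $f$ on the order interval forces $f(a_\infty)=a_\infty$, so $a_\infty\in S\cap\mathrm{FP}=\{\text{ESS}\}$ (your direct limit-passing is a slight streamlining of the paper's contradiction argument in Proposition~\ref{prop:limit-is-FP-order}, but rests on the same continuity lemma). The minimality and uniqueness material you append goes beyond the stated corollary --- it corresponds to the paper's Steps 5--6 --- and is likewise sound.
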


We have thus proved the existence of ESS and provided an algorithm to find it. 

\subsection*{Step 5: Minimality via Monotonicity}
\begin{lemma}[Monotonicity of $f$]
\label{lem:f-monotone}
If $x,y\in\mathcal{B}$ satisfy $x\le y$ coordinatewise, then $f(x)\le f(y)$ coordinatewise.
\end{lemma}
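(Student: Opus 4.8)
The plan is to reduce everything to the closed-form expression for $f$ on the order interval $\mathcal{B}$ already established in Lemma~\ref{lem:f-continuous-box}, namely $f(x)_i = \min\{(o_{\max})_i,\ \max\{x_i,(Tx)_i\}\}$ for every $x\in\mathcal{B}$ and each $i$, and then to observe that every ingredient of this expression is order-preserving. Since we are given $x,y\in\mathcal{B}$ with $x\le y$ coordinatewise, both $f(x)$ and $f(y)$ are computed by this same formula, so the comparison becomes a purely scalar monotonicity check carried out one coordinate at a time.

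First I would fix an index $i$. From $x\le y$ I immediately get $x_i\le y_i$, and from Lemma~\ref{lem:isotone} (isotonicity of the transition map) I get $Tx\le Ty$, hence $(Tx)_i\le (Ty)_i$. Next I would invoke monotonicity of the binary maximum in each argument: since $x_i\le y_i$ and $(Tx)_i\le (Ty)_i$, it follows that $\max\{x_i,(Tx)_i\}\le \max\{y_i,(Ty)_i\}$. Finally, the map $t\mapsto \min\{(o_{\max})_i,\,t\}$ is nondecreasing, so applying it to the previous inequality yields $f(x)_i = \min\{(o_{\max})_i,\max\{x_i,(Tx)_i\}\}\le \min\{(o_{\max})_i,\max\{y_i,(Ty)_i\}\} = f(y)_i$. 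As $i$ was arbitrary, $f(x)\le f(y)$ coordinatewise, which is the claim.

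There is essentially no hard step here; the only point requiring care is that the clean min--max formula for $f$ is valid only on $\mathcal{B}$, which is precisely where the hypothesis places $x$ and $y$, so no re-examination of the original two-case definition is needed. One could instead argue directly from the piecewise definition, but that forces a split on which branch each of $x$ and $y$ occupies at coordinate $i$ and a check of all four combinations — exactly the bookkeeping that Lemma~\ref{lem:f-continuous-box} is designed to eliminate. I would therefore present the short argument above.
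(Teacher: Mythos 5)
Your argument is correct and is essentially identical to the paper's own proof: both rely on the min--max formula for $f$ on $\mathcal{B}$ from Lemma~\ref{lem:f-continuous-box}, the isotonicity of $T$ from Lemma~\ref{lem:isotone}, and the coordinatewise monotonicity of $\max$ and of $t\mapsto\min\{(o_{\max})_i,t\}$. Nothing further is needed.
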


\begin{proof}
Fix $i\in\{1,\dots,n\}$. By Lemma~\ref{lem:isotone}, $x\le y$ implies $(Tx)_i\le (Ty)_i$. Hence
\[
\max\{x_i,(Tx)_i\}\;\le\;\max\{y_i,(Ty)_i\}.
\]
Taking the minimum with $(o_{\max})_i$ preserves the inequality, so
\[
f(x)_i \;=\; \min\{(o_{\max})_i, \max\{x_i,(Tx)_i\}\}
       \;\le\; \min\{(o_{\max})_i, \max\{y_i,(Ty)_i\}\}
       \;=\; f(y)_i.
\]
Thus $f(x)\le f(y)$ coordinatewise.
\end{proof}

\begin{corollary}
\label{cor:order-preserve}
If $o$ is an ESS and $x\in\mathcal{B}$ with $x\le o$, then $f(x)\le f(o)=o$.
\end{corollary}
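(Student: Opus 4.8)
The plan is to obtain the statement immediately from two ingredients that are already in place: every ESS is a fixed point of $f$, and $f$ is order-preserving on the box $\mathcal{B}$ (Lemma~\ref{lem:f-monotone}). No new machinery is needed; the work is purely in assembling these.

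First I would record that $f(o)=o$. Since $o$ is an ESS, the identification $\{\text{ESS}\}=S\cap\mathrm{FP}$ puts $o$ in $\mathrm{FP}$, and membership in $\mathrm{FP}$ means exactly $f(o)=o$ by the definition of the fixed-point set. I would also note that $o\in S$, and that the defining bounds of $S$ coincide with membership in $\mathcal{B}$, so $S\subseteq\mathcal{B}$ and in particular $o\in\mathcal{B}$; this is precisely what allows $o$ to be fed into the monotonicity lemma.

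Then, given $x\in\mathcal{B}$ with $x\le o$ coordinatewise, I would apply Lemma~\ref{lem:f-monotone} to the pair $x\le o$, both of which lie in $\mathcal{B}$, obtaining $f(x)\le f(o)$. Substituting $f(o)=o$ from the previous step gives $f(x)\le f(o)=o$, which is the assertion. (This is the form in which the corollary is used downstream, to compare an arbitrary ESS against the monotone iterates and establish minimality.)

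I do not expect any genuine obstacle here: the statement is essentially a one-line consequence of monotonicity together with the fixed-point characterization of ESS. The only spot that warrants attention is verifying the hypotheses of Lemma~\ref{lem:f-monotone}, namely that \emph{both} arguments lie in $\mathcal{B}$ — for $x$ this is assumed, and for $o$ it requires invoking the inclusion $S\subseteq\mathcal{B}$ explicitly rather than taking it for granted, since Lemma~\ref{lem:f-monotone} is only stated on the order interval and not on all of $\mathbb{R}^n$.
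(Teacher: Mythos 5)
Your proposal is correct and follows exactly the paper's argument: the corollary is an immediate consequence of Lemma~\ref{lem:f-monotone} together with the fixed-point property $f(o)=o$ of an ESS. Your extra care in checking that $o\in S\subseteq\mathcal{B}$ so that the monotonicity lemma applies is a sound (if minor) elaboration of what the paper leaves implicit.
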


\begin{proof}
Immediate from Lemma~\ref{lem:f-monotone} and the fixed point property $f(o)=o$.
\end{proof}

\begin{proposition}[Existence of a minimal ESS]
\label{prop:minimal-ESS}
There exists an ESS $o^\ast$ such that $o^\ast\le o$ for all $o\in\{\mathrm{ESS}\}$. Moreover, $o^\ast$ is realized as the limit $a_\infty$ of the monotone iteration $a_{n+1}=f(a_n)$ starting from $a_0=o_{\min}$.
\end{proposition}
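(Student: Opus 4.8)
The plan is to run the usual monotone-iteration comparison: show by induction that every iterate $a_n$ from Proposition~\ref{prop:an-converges-in-S} lies coordinatewise below an arbitrary ESS, and then pass to the limit. Fix any ESS $o$. By the characterization $\{\mathrm{ESS}\}=S\cap\mathrm{FP}$ together with $S\subseteq\mathcal{B}$, we have $o\in\mathcal{B}$, $f(o)=o$ (since $o\in\mathrm{FP}$), and in particular $(o_{\min})_i\le o_i$ for all $i$. This last fact gives the base case $a_0=o_{\min}\le o$.

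For the inductive step, assume $a_n\le o$. By Proposition~\ref{prop:an-converges-in-S}, $a_n\in S\subseteq\mathcal{B}$, so Corollary~\ref{cor:order-preserve} applies with $x=a_n$: using $f(o)=o$ we get $a_{n+1}=f(a_n)\le f(o)=o$. Hence $a_n\le o$ for all $n$. Letting $n\to\infty$ and using that coordinatewise limits preserve non-strict inequalities (Proposition~\ref{prop:an-converges-in-S} guarantees the limit $a_\infty$ exists), we conclude $a_\infty\le o$. Since $o$ was an arbitrary ESS and $a_\infty$ is itself an ESS by Corollary~\ref{cor:ESS-exists-order}, setting $o^\ast:=a_\infty$ yields an ESS with $o^\ast\le o$ for all $o\in\{\mathrm{ESS}\}$, which is exactly the minimality claim; such a least element is automatically unique by antisymmetry of the coordinatewise order.

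I do not anticipate a genuine obstacle here — the proposition is essentially a corollary of the monotonicity of $f$ established in Step~5. The only points that require a moment of care are the membership facts needed to invoke Lemma~\ref{lem:f-monotone}/Corollary~\ref{cor:order-preserve}, namely that $a_n\in\mathcal{B}$ and that every ESS lies in $\mathcal{B}$; both are immediate from $S\subseteq\mathcal{B}$ and from $a_n\in S$ (Proposition~\ref{prop:an-converges-in-S}) and $\{\mathrm{ESS}\}=S\cap\mathrm{FP}$ respectively. One should also note explicitly that $o_{\min}\le o$ for every ESS $o$, which is built into the definition of an ESS, so the induction genuinely starts.
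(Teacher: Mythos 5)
Your proof is correct and uses the same essential mechanism as the paper's: the inductive comparison $a_n\le o$ via Corollary~\ref{cor:order-preserve} starting from $a_0=o_{\min}\le o$, followed by passage to the limit and the fact that $a_\infty$ is an ESS by Corollary~\ref{cor:ESS-exists-order}. The paper wraps this in a proof by contradiction built around the coordinatewise minimum $v_i=\min\{o_i : o\in\{\mathrm{ESS}\}\}$, but your direct induction establishes exactly the same inequalities and is, if anything, cleaner.
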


\begin{proof}
Suppose, for contradiction, that no such $o^\ast$ exists. Then define a vector $v\in\mathbb{R}^n$ by
\[
v_i := \min\{\,o_i : o\in\{\mathrm{ESS}\}\,\}.
\]
By construction $v\le o$ for all $o\in\{\mathrm{ESS}\}$. Since $a_0=o_{\min}$ and $o_{\min}\le o$ for all ESS by definition, we also have $a_0\le v$. By Corollary~\ref{cor:order-preserve}, this inequality is preserved under iteration, hence
\[
a_n \le v \qquad\text{for all } n.
\]
Therefore the limit $a_\infty=\lim_{n\to\infty}a_n$ satisfies $a_\infty\le v$.

On the other hand, if $v$ is not itself an ESS, then for each $o\in\{\mathrm{ESS}\}$ there exists at least one coordinate $i$ with $o_i-v_i>0$. This means $a_\infty$ cannot equal any $o\in\{\mathrm{ESS}\}$, because $a_\infty\le v < o$ in that coordinate. But by Corollary~\ref{cor:ESS-exists-order}, $a_\infty$ is an ESS. This contradiction shows that $v$ must itself belong to $\{\mathrm{ESS}\}$.

Thus $o^\ast:=v$ is an ESS with $o^\ast\le o$ for all $o\in\{\mathrm{ESS}\}$. By the iterative construction, $o^\ast=a_\infty$.
\end{proof}
\subsection*{Step 6: Uniqueness of the ESS}
\begin{proposition}[No two comparable ESS]
\label{prop:no-two-comparable-ESS}
Let $o,o'\in\{\mathrm{ESS}\}=S\cap\mathrm{FP}$. If $o'\ge o$ coordinatewise, then $o'=o$.
\end{proposition}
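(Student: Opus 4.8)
The plan is to follow the roadmap. Set $\Delta := o' - o$, which is coordinatewise nonnegative by hypothesis. The crux is the inequality $\Delta \le T\Delta$ (coordinatewise); once this is established, $\rho(T) < 1$ forces $\Delta = 0$. To prove $\Delta \le T\Delta$ I would argue one coordinate at a time, writing $h(x) := x - Tx$, which is linear, so that $h(o') - h(o) = \Delta - T\Delta$ and $\Delta_i \le (T\Delta)_i$ is equivalent to $h(o')_i \le h(o)_i$. Fix $i$. If $o_i = o_{\max,i}$, then $o_{\max,i} = o_i \le o'_i \le o_{\max,i}$ forces $o'_i = o_i$, so $\Delta_i = 0 \le (T\Delta)_i$ (using $T_{ij}\ge 0$ and $\Delta \ge 0$); symmetrically, if $o'_i = o_{\min,i}$ then $o_i = o'_i$ and $\Delta_i = 0$. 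In the remaining case $o_i < o_{\max,i}$ and $o'_i > o_{\min,i}$: since $o$ is an ESS with $o_i \ne o_{\max,i}$, the ESS conditions give $o_i \ge (To)_i$ (directly if $o_i = o_{\min,i}$, via $o_i = (To)_i$ if $o_{\min,i}<o_i<o_{\max,i}$), i.e. $h(o)_i \ge 0$; likewise, since $o'$ is an ESS with $o'_i \ne o_{\min,i}$, the conditions give $o'_i \le (To')_i$, i.e. $h(o')_i \le 0$. Hence $h(o')_i \le 0 \le h(o)_i$, i.e. $\Delta_i \le (T\Delta)_i$.

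With $\Delta \le T\Delta$ in hand, I would iterate: by Lemma~\ref{lem:isotone}, applying $T$ preserves this inequality, so by induction $\Delta \le T^k\Delta$ for all $k\ge 0$. Since the column sums of $T$ are all below $1$ we have $\rho(T)<1$, hence $T^k\to 0$ entrywise and $T^k\Delta\to 0$; passing to the limit gives $\Delta\le 0$, and combined with $\Delta\ge 0$ this forces $\Delta = 0$, i.e. $o'=o$. Uniqueness of the ESS then follows at once by combining this with Proposition~\ref{prop:minimal-ESS}: the minimal ESS $o^\ast$ satisfies $o^\ast \le o$ for every ESS $o$, so every ESS is comparable to $o^\ast$ and therefore equal to it.

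The step I expect to be the main obstacle is the coordinatewise proof of $\Delta \le T\Delta$. The subtlety is that on ``boundary'' coordinates --- where $o_i$ attains $o_{\max,i}$ or $o'_i$ attains $o_{\min,i}$ --- the one-sided ESS inequalities one wants ($h(o)_i\ge 0$ and $h(o')_i\le 0$) need not both be available, so those coordinates must be disposed of first using comparability to pin $\Delta_i = 0$; only on the complementary coordinates can the sign conditions be read off directly from the ESS definition. The remaining ingredients --- monotone iteration via Lemma~\ref{lem:isotone} and $T^k\to 0$ from $\rho(T)<1$ --- are standard.
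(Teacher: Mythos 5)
Your proof is correct and follows essentially the same route as the paper: both hinge on establishing the coordinatewise inequality $\Delta \le T\Delta$ from the ESS case conditions, with your bookkeeping (disposing of the boundary coordinates $o_i = o_{\max,i}$ and $o'_i = o_{\min,i}$ first, then reading off $h(o)_i \ge 0$ and $h(o')_i \le 0$) being, if anything, a slightly cleaner organization of the paper's three-way split on $o_i$. The only other difference is the concluding step: you iterate to $\Delta \le T^k\Delta$ and send $T^k \to 0$ via $\rho(T) < 1$, whereas the paper sums the inequality over coordinates and exploits the column sums being strictly below $1$ directly; both are valid.
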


\begin{proof}
Assume $o,o'\in S\cap\mathrm{FP}$ and $o'\ge o$. Set $\Delta:=o'-o\ge 0$. We will show $\Delta=0$.

\medskip\noindent
\emph{Step 1: A pointwise inequality $\Delta\le T\Delta$.}
Fix $i\in\{1,\dots,n\}$. We distinguish by $o_i$:

\smallskip
\underline{(a) $o_i=o_{\max,i}$.} Then $o'_i\ge o_i=o_{\max,i}$ forces $o'_i=o_{\max,i}$, so $\Delta_i=0\le (T\Delta)_i$.

\smallskip
\underline{(b) $o_{\min,i}<o_i<o_{\max,i}$.} Since $o\in S\cap\mathrm{FP}$ and $o_i<o_{\max,i}$, the characterization of ESS yields $(To)_i=o_i$. For $o'$, either $o'_i<o_{\max,i}$ (then $(To')_i=o'_i$) or $o'_i=o_{\max,i}$ (then $(To')_i\ge o'_i$). In both subcases,
\[
(T\Delta)_i \;=\; (To')_i-(To)_i \;\ge\; o'_i-o_i \;=\; \Delta_i.
\]

\smallskip
\underline{(c) $o_i=o_{\min,i}$.} Since $o\in S$, the one-sided condition gives $(To)_i\le o_i$. For $o'$, either $o'_i<o_{\max,i}$ (then $(To')_i=o'_i$) or $o'_i=o_{\max,i}$ (then $(To')_i\ge o'_i$). In both subcases,
\[
(T\Delta)_i \;=\; (To')_i-(To)_i \;\ge\; o'_i-o_i \;=\; \Delta_i.
\]

\smallskip
Thus for all $i$ we have $\Delta_i\le (T\Delta)_i$, i.e.\ $\Delta\le T\Delta$ coordinatewise.

\medskip\noindent
\emph{Step 2: Summing forces equality, hence $T\Delta=\Delta$.}
Let $c_j:=\sum_{i=1}^n T_{ij}$ be the $j$th column sum, with $0 < c_j < 1$ by assumption. Summing the inequality from Step~1 over $i$ gives
\[
\sum_{i=1}^n \Delta_i
\;\le\;
\sum_{i=1}^n (T\Delta)_i
\;=\;
\sum_{i=1}^n \sum_{j=1}^n T_{ij}\Delta_j
\;=\;
\sum_{j=1}^n \!\Bigl(\sum_{i=1}^n T_{ij}\Bigr)\Delta_j
\;=\;
\sum_{j=1}^n c_j\,\Delta_j
\;\le\;
\sum_{j=1}^n \Delta_j.
\]
Where the last equality only holds if $\Delta = 0$ i.e.\ $o'=o$.

\qedhere

\end{proof}

\end{document}